\renewcommand{\baselinestretch}{1.1}
\newtheorem{theorem}{Theorem}
\newtheorem{assumption}{Assumption}
\newtheorem{lemma}{Lemma}
\newcommand\bSig{\mbox{\boldmath${\Sigma}$}}
\newcommand\bC{{\bf C}}
\newcommand\bI{{\bf I}}
\newcommand\bL{{\bf L}}
\newcommand\mbR{{\mathbb R}}
\newcommand\bV{{\bf V}}
\newcommand\bX{{\bf X}}
\newcommand\bZ{{\bf Z}}
\newcommand\cD{{\mathcal D}}
\newcommand\cH{{\mathcal H}}
\newcommand\cI{{\mathcal I}}
\newcommand\cS{{\mathcal S}}
\newcommand\hv{{\hat v}}
\newcommand\ip{{i^\prime}}
\newcommand\jp{j^\prime}
\newcommand\np{n^\prime}
\newcommand\eg{\varepsilon}
\DeclareMathOperator{\E}{E}
\newcommand\lp{\left (}
\newcommand\rp{\right )}
\newcommand\lb{\left [}
\newcommand\rb{\right ]}
\newcommand\lbr{\left \{}
\newcommand\rbr{\right \}}
\newcommand\lip{\left \langle}
\newcommand\rip{\right \rangle}
\begin{document}
\title{A randomness test for  functional panels}

\author[1]{Piotr Kokoszka \thanks{Piotr.Kokoszka@colostate.edu}}
\author[2]{Matthew Reimherr \thanks{\textit{Corresponding author}, mreimherr@psu.edu}}
\author[3]{Nikolas W\"olfing \thanks{Woelfing@zew.de, funding provided by the Helmholtz Association through the Helmholtz Alliance Energy-Trans is gratefully acknowledged.}}
\affil[1]{Colorado State University}
\affil[2]{Pennsylvania State University} 
\affil[3]{Centre for European Economic Research, ZEW Mannheim}

\date{}

\maketitle

\begin{abstract}
  Functional panels are collections of functional time series, and
  arise often in the study of high frequency multivariate data.  We
  develop a portmanteau style test to determine if the cross--sections
  of such a panel are independent and identically distributed. Our
  framework allows the number of functional projections and/or the
  number of time series to grow with the sample size.  A large sample
  justification is based on a new central limit theorem for random
  vectors of increasing dimension.  With a proper normalization, the
  limit is standard normal, potentially making this result easily
  applicable in other FDA context in which projections on a subspace
  of increasing dimension are used.  The test is shown to have correct
  size and excellent power using simulated panels whose random
  structure mimics the realistic dependence encountered in real panel
  data.  It is expected to find application in climatology, finance,
  ecology, economics, and geophysics.  We apply it to Southern Pacific
  sea surface temperature data, precipitation patterns in the
  South--West United States, and temperature curves in Germany.
\end{abstract}



\section{Introduction} \label{s:int}
We define a {\em functional panel} as a
stochastic process  of the form
\begin{equation} \label{e:panel}
\bX_n(t) = [ X_{1, n}(t), \ldots, X_{I, n}(t)]^\top, \ \ \ \
1 \le n \le N,
\end{equation}
where each $X_{i, n}$ is a function of time $t$.  The dimension $I$ can increase
with the series length $N$, with  examples discussed below.
For the applications that motivate the present research, it is enough
to think of the $X_{i, n}$ as  curves defined on the same time
interval, but in principle, functions on more general domains, e.g.,
volumes  or surfaces, can be considered.  The discrete time index $n$
refers to a unit like a day, week or year.  The index $t$ is the
continuous time argument of the function $X_{i, n}$. The index $i$
refers to the $i^\text{th}$ time series in the panel. This paper develops a
test of the null hypothesis

\medskip

$H_0$: the random elements $\bX_n, \ 1 \le n \le N,$ are independent
and identically distributed.

\medskip

\noindent Our test is designed to detect serial dependence, and we assume stationarity across $n$ even under the alternative. 

To illustrate the functional panel concept, Figure~\ref{f:ninoregions}
shows four curves, $I=4$, for a fixed $n$.  The index $n$ refers to
years, and the four curves describe the sea surface temperature in
four regions used to measure the El Ni\~{n}o climatic phenomenon.
Figure~\ref{f:SantaCruz} shows another example, now with $i$ fixed.
The data point $X_{i,n}(t)$ is the log--precipitation at location $i$
on day $t$ of year $n$. The construction of this series is explained
in detail in Section~\ref{s:fsp}.  Data structures of this type are
very common in climate studies; $X_{i, n}(t)$ can be total
precipitation or maximum temperature on day $t, \ 1 \le t \le 365,$ of
year $n$ at location $i$ in some region.  In such climate
applications, $I$ is comparable to $N$ because records often start at
the end of the 19$^{\text{th}}$ or towards the middle of the
20$^{\text{th}}$century, thus, they are about 60 to 120 years long
($N\approx 60\text{ to }120$), and there are several dozen measurement
stations in a region ($I\approx 40 \text{ to }120$).  (The United
States Historical Climatology Network -- USHCN -- contains weather
data collected at 1,218 stations across the 48 contiguous states,
starting from ca. 1900.)

\begin{figure}[hbt]
\begin{center}
\includegraphics[width=0.8\textwidth]{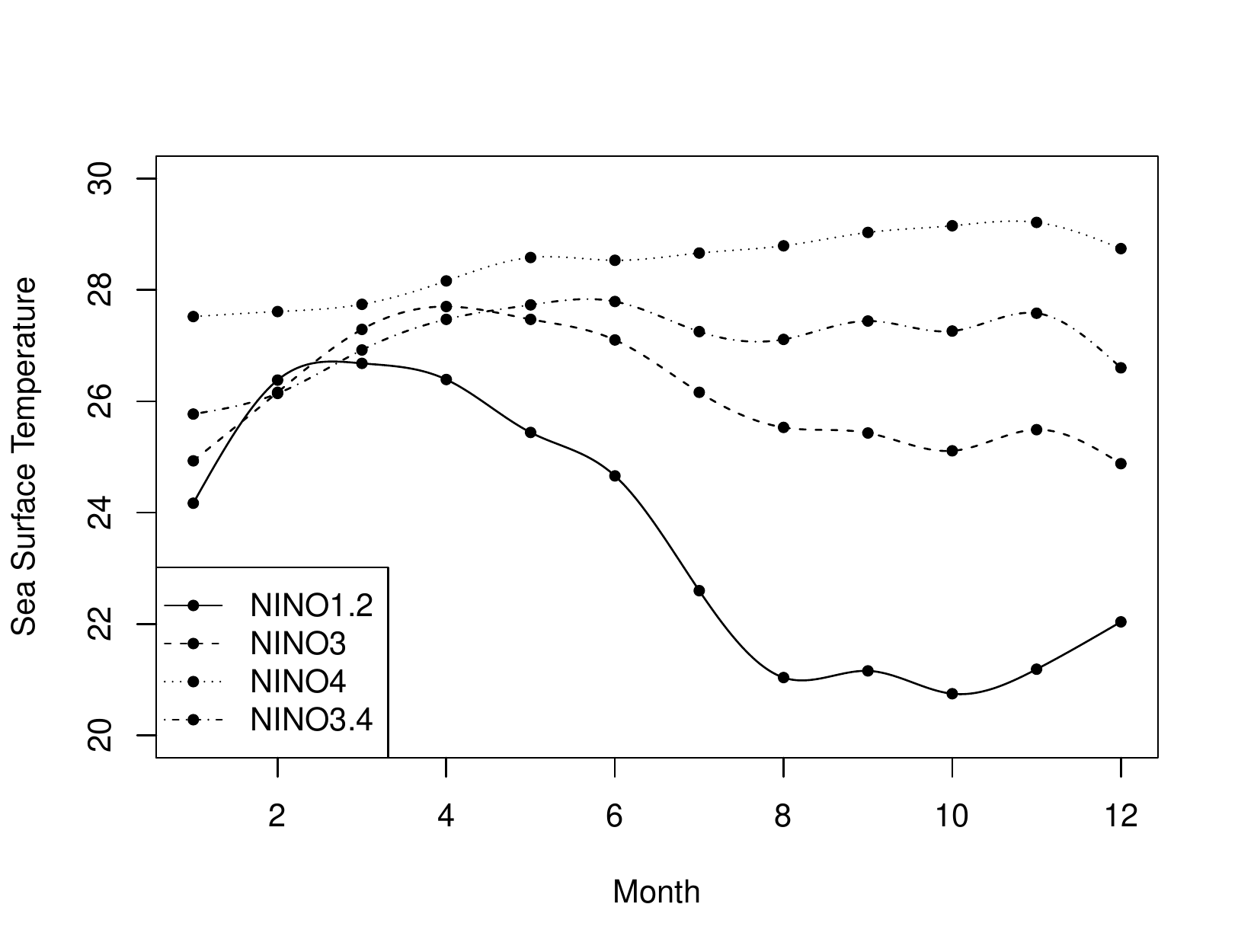}
\end{center}
\caption{Sea Surface Temperature curves of El Ni\~{n}o regions in 2012}
\label{f:ninoregions}
\end{figure}

\begin{figure}[ht]
\begin{center}
\includegraphics[width=\textwidth]{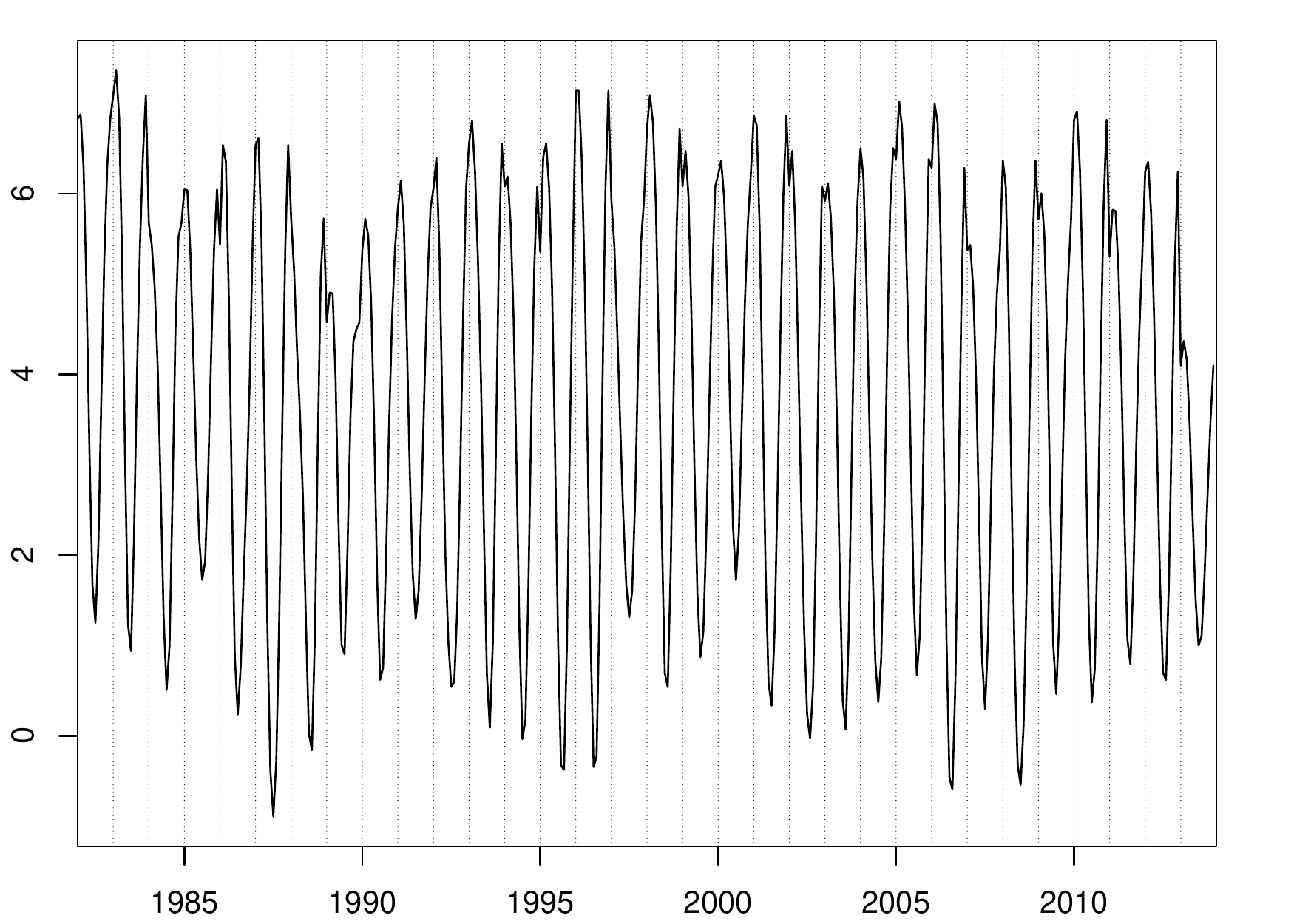}%
\end{center}
\caption{Smoothed log--precipitation, Santa Cruz, California, 1982 to 2013.}
\label{f:SantaCruz}
\end{figure}

Climate data do not exhaust possible applications.  Intraday
financial data typically come in panels. For example, $X_{i,n}(t)$
can be the exchange rate (against the US dollar) of currency $i,\ 1
\le i \le I,$ at minute $t$ of the $n^\text{th}$ trading day.  Panels of
exchange rates contain information on the intraday strength of the US
dollar. Corporate  bond yield curves are
large panels because a bond portfolio includes hundreds of companies,
$I\sim 10^3$; in economic studies, government bond yields curves form
small panels because only a few countries are considered to assess
risk in a region, see, e.g., \citet{hardle:majer:2015}.
At the intersection of climate and financial panels,
\citet{hardle:osipienko:2012}  use  a functional panel framework
in which $i$ refers to a spatial location, and the interest lies in
pricing a financial derivative product whose value depends on the
weather at location $i$. Modeling electricity data involves functional
panels indexed by regions or power companies with the daily index $n$,
see \citet{liebl:2013} for an overview. Daily pollution (particulate,
oxide or ozone) curves at several locations within a city form a functional
panel of moderate size.

In these examples, the dependence between the $X_{i,n}, 1 \le i \le
I$, for fixed $n$, is strong, and, generally, the temporal dependence,
indexed by $n$, cannot be neglected. In specific applications, this
dependence is modeled by deterministic trends or periodic functions
(climate data) or by common factors (financial data).  To validate a
model, it is usual to verify that  residual curves computed in some
manner form a random sample.
(See, e.g., \citet{kowal:matteson:ruppert:2014} for a model applied 
to functional panels of government yield curves and neurological 
measurements with an explicit residual iid assumption.)
It is thus important  to develop a test of
randomness, i.e., to test the null hypothesis $H_0$ stated above.  Such a
test could be viewed as analogous to tests of randomness which are
crucial in time series analysis, see, e.g.,  Section 1.6 of
\citet{brockwell:davis:2002}. They can be applied to original or
transformed data, or to model residuals.  The purpose of this paper is
to develop a suitable test for functional panels.
Before discussing our approach, we provide
some historical background.  Our methodology builds on the well-established 
paradigm of testing for randomness in time series which
can be traced back to the work of \citet{box:pierce:1970}, which
was followed by a number of influential contributions including
\citet{chitturi:1976,hosking:1980,ljung:box:1978} and \citet{mcleod:1978}.  
These tests use as a starting point
 the asymptotic distribution of the sample autocorrelations
of a white noise: the  $\hat\rho_h$ are approximately independent
normal random variables with mean zero and variance $1/N$, where 
$N$ is the sample size. 
Therefore, $N\sum_{h=1}^H {\hat\rho_h}^2$ is approximately
chi--square with $H$ degrees of freedom.
This research is
now reported in textbook expositions including
\citet{brockwell:davis:1991}, \citet{li:2004}, and
\citet{lutkepohl:2005}.  More recent contributions include
\citet{fisher:gallagher:2012} and \citet{pena:rodriguez:2002}.
For a single functional time series, a randomness test was derived by
\citet{gabrys:kokoszka:2007} and elaborated on by
\citet{horvath:huskova:rice:2013} and \citet{jiofack:nkiet:2010}.  In the context of {\em scalar} panel
data, the only work we are aware of is \citet{fu:etal:2002} who
define residual autocorrelations in the autoregressive panel model of
\citet{hjellvik:tjostheim:1999} as
\[
\hat r_h = \frac{\sum_{i=1}^I \sum_{n=1}^{N-h} \hat\eg_{i, n+h}
\hat\eg_{i,n}}
{\sum_{i=1}^I \sum_{n=1}^{N-h}\hat\eg_{i,n}^2},
\]
where the $\hat\eg_{i,n}$ are appropriately defined residuals.  In
their asymptotic setting, the number of temporal points, $N$, is
fixed, and the number of time series, $I$, increases to infinity. They
show that for any fixed $H$, the vector $[\hat r_1, \ldots,
\hat r_H]^\top$ is asymptotically normal with the asymptotic covariance
matrix that can be estimated. By constructing a suitable  quadratic
form, they derive a portmanteau test statistic whose asymptotic
distribution is $\chi^2_H$. There is at present no randomness test
for functional panels, and it is our objective to derive a practically
useable test which is supported by asymptotic arguments.
\citet{hsiao:2003} provides an excellent account of the methodology
for scalar panel data.

We reduce the dimension of the functions $X_{i,n}$ by using
projections on functional principal components. Denote the number of
such projections for the $i^\text{th}$ series in the panel by $p(i)$. The
total number of scalar time series we must consider is thus
$p=\sum_{i=1}^I p(i)$. For climate applications discussed above, $p$
can approach several hundred.  It is thus natural to consider
asymptotics with $p$ increasing to infinity with $N$.  Our theory
applies to cases of fixed $p(i)$ and increasing $I$, fixed $I$ and
increasing $p(i)$, or both increasing with $N$.  In this framework, we
show that it is possible to construct a test statistic that is
asymptotically standard normal.  The work of \citet{fu:etal:2002}
can be viewed as considering $p(i)=1$, $N$ fixed, and $I\to\infty$.
Our setting is thus quite different, and requires a new asymptotic
framework. Despite some theoretical complexity, our approach leads to
a test whose asymptotic null distribution is standard normal, and
which can be algorithmically implemented. It is therefore hoped that
our work will find application in the analysis of functional panels of
the type specified above. In particular, it could motivate the
development of suitable change point tests that target  more
specific alternatives, \citet{aston:kirch:2012AAS} and \citet{zhang:shao:hayhoe:wuebles:2011} consider the case of $I=1$.

Since the goal of our methodology is to check the independence
assumption, our tests are based on the usual functional principal
component scores; functional principal components form the optimal
basis under the null hypothesis. They  have an established place in FDA
research with readily available {\tt R} and {\tt matlab}
implementations.  In principle, other basis systems could be used,
especially those custom--developed for time series of functions, see, e.g.,
\citet{hormann:2015} and \citet{panaretos:2013}, or even those
going beyond linear dimension reduction, see \citet{li:song:2016}.
In each of these cases, our general approach could be applied to the
resulting scores, but new asymptotic justifications and numerical 
implementations  would have to be developed.

The remainder of the paper is organized as follows.  In
Section~\ref{ss:results}, we formalize the asymptotic framework,
derive the test statistic and establish its asymptotic normality.
Section~\ref{ss:fsi} describes the practical implementation of the
test procedure in algorithmic steps. Section \ref{s:fsp} illustrates
the application of the test on three climate data sets which form
functional panels: sea surface temperatures in the pacific ocean, US
regional precipitation data and temperature curves in Germany.
Section \ref{ss:sim} further examines the finite sample performance of
our test by applying it to simulated data which resemble the above
mentioned data sets.  The proofs of the asymptotic results are
presented in Section \ref{s:multi} and in the supplemental material.
In addition to these mathematical calculations, the supplemental
material contains a zipped folder containing the complete {\tt R}
code, a corresponding {\tt README} file and the data sets.

\section{Testing procedure} \label{s:test}

\subsection{Assumptions and large sample
results} \label{ss:results}

We assume that all functions have been rescaled so that their
domain is the unit interval $[0,1]$. We also assume that
they have mean zero: $\E X_{i,n}(t) = 0$ for almost all $t\in[0,1]$.
In practice, the mean is removed by subtracting the sample mean, see Section~\ref{ss:fsi}, so that the functional time series forming the
panel each have sample mean zero. Subtracting the sample
mean introduces  additional terms of the
order $O_P(N^{-1})$, and so does not affect the limiting distribution.

Denote by $L^2= L^2([0,1])$ the Hilbert space of square integrable
functions with the usual inner product $\langle \cdot, \cdot \rangle$
and the norm $\| \cdot \|$ it generates.  The assumptions and the
definition of the test statistic involves the Kronecker product, and
its properties are heavily used in the proofs. Readers are referred to
\citet{graham:1981} for a very useful exposition.  In the context
of matrices and vectors, we take $\otimes$ to be the usual Kronecker
product.  Between two functions or operators $x$ and $y$, we take $x
\otimes y$ to be the operator $\langle x, \cdot \rangle y$.  We will
often not distinguish notationally between the cases, as it will
always be clear from the context which we mean.  Further details will
be provided as needed.  By $| \cdot |$ we denote the Euclidean norm of
a vector.

Our first assumption states the functions forming the panel
are in $L^2$ and have uniformly bounded fourth moments.

\begin{assumption} \label{a:1}
Assume that $\{\bX_n\}$ is a {\em zero mean}
sequence of random
functional vectors taking values in $\{L^2\}^I$.  Furthermore, assume
that there exists a constant $M$ such that
\[
\E \| X_{i,n}\|^4 \leq M <  \infty,
\quad i=1, \ldots, I,  \quad n=1,\ldots,N.
\]
\end{assumption}

Our second assumption connects the rate of growth of $I$ and the
$p(i)$ to the rate of decay of the gaps between the eigenvalues
of individual series and the rate of decay of the eigenvalues of
the whole panel. Assumptions of this type go back at least to the work
of \citet{dauxois:1982}. 
To the best of our knowledge, only the case of a single functional
series or sample, possibly with explanatory variables or functions,
has been considered, see \citet{cai:hall:2006}, \citet{crambes:kneip:sarda:2009},
\citet{fremdt:2014}, \citet{hall:muller:wang:2006}, and \citet{paul:peng:2009}, 
among many others.
The complexity of our Assumption~\ref{a:pn2} is due to the
panel structure of the data. To formulate it,  define
\[
{\bf X}_{in} = [ X_{1in},  \ldots, X_{p(i)in}]^\top, \ \ \ \
X_{jin} = \langle X_{i,n}, v_{i,j} \rangle
\]
and column vectors
$
{\bf X}_n^\star
= [{\bf X}_{1n}^\top, \ldots, {\bf X}_{In}^\top]^\top
$
of length $p_N:=p= \sum_{i=1}^{I} p(i)$.
The panel $\lbr {\bf X}_n^\star\rbr$ is thus an approximation
of dimension $p_N$ to the functional panel  $\lbr {\bf X}_n \rbr$
given by \eqref{e:panel}.
 Let
\[
\bC_{0,N} = \E \lb\bX_n^\star \bX_n^{\star \top} \rb
\]
be the $p_N\times p_N$ covariance matrix whose eigenvalues are
$\gamma_1 \ge  \ldots \ge \gamma_{p_N}$.  Denote by
$\lambda_{i,1} > \lambda_{i,2}>  \ldots $ the eigenvalues of the
covariance operator $\E[ X_{i,1} \otimes X_{i,1}]$ and define
\[
\Gamma_N = \sum_{i,\ip=1}^I \sum_{j = 1}^{p(i)}
\sum_{\jp = 1}^{p(\ip)}
(\alpha_{i,j}^{-1} + \alpha_{\ip, \jp}^{-1})^2,
\]
where $\alpha_{i,1} = \lambda_{i,1} - \lambda_{i,2}$ and for $j \geq 2$,
$\alpha_{i,j}
= \min\{\lambda_{i, j-1} - \lambda_{i,j},
\lambda_{i,j} - \lambda_{i, j+1}  \}$.

\begin{assumption} \label{a:pn2} Assume that the sequence $p_N$
 is such that $p_N \to \infty$ and
\[
N^{-1/2}   p_N^{-1}   \gamma_{p_N}^{-3}  I^3\Gamma_N^{1/2} \to 0.
\]
(The number of panels, $I$, can either stay fixed or tend to infinity.)
\end{assumption}

Assumption~\ref{a:pn2} has the following interpretation.  The first
two terms, $N^{-1/2} p_N^{-1}$, indicate the rate at which information
accumulates as $N \to \infty$, while the third and fourth terms,
$\gamma_{p_N}^{-3} I^3$, indicate the rate at which the panel
structure detracts information (with $\gamma_{p_N}$ governing the
correlation between series).  The last term $\Gamma_N^{1/2}$
incorporates the spacing of the eigenvalues and is common in
asymptotics with an increasing number of projections.  A more readily
interpretable form of this assumption is stated in
Section~\ref{ss:I=1} for the case of a single time series ($I=1$).  We
do not impose any specific dependence structure, and prefer to use a
general, admittedly rather technical, Assumption~\ref{a:pn2}.  An
alternative approach would be to impose some temporal dependence
structure, e.g.,  as in \citet{jirak:2015}, and establich analogous
results under such assumptions.  Instead, we give a brief example to
help shed further light on Assumption \ref{a:pn2}.

Assume that each element of the panel has the same covariance operator
so that $\lambda_{i,j} \equiv \lambda_j$ and $\alpha_{i,j} \equiv
\alpha_j$ for all $i$ and $j$.  In this case, it makes sense to also
assume that $p_i \equiv p$ so that $p_N = I p$.  Collect the
$\lambda_j$ into a diagonal matrix $\Lambda$.  Furthermore, assume
that the panels are independent so $\bC_{0,N} = \Lambda \otimes \bI_{I
  \times I}$, where $\otimes$ denotes the Kronecker product and
$\bI_{I\times I}$ the $I \times I$ identity.  This then implies that
$\gamma_{p_N} = \lambda_{p}$.
 
We now assume explicitly that $\lambda_j = j^{-\alpha}$ and $\lambda_j - \lambda_{j+1} = j^{-\alpha - 1}$.  This implies that
\begin{align*}
\Gamma_N & \leq I^2 \sum_{j} \sum_{j^\prime} \left[j^{\alpha + 1} + (j')^{\alpha + 1}  \right]^2 \\
& = I^2 \left[ 2 p \sum_{j=1}^p j^{2\alpha + 2} + 2   \left(\sum_{j=1}^p j^{\alpha + 1}\right)^2 \right] \\
& \approx  I^2 \left[ 2  p^{2\alpha +4} (2 \alpha + 3)^{-1} + 2 p^{2 \alpha + 4} ( \alpha + 2)^{-2} 
\right] \sim I^{2} p^{2 \alpha + 4}.
\end{align*}
Here $\approx$ means the limit of their ratio tends to 1, while $\sim $ means the limit of their ratio is a finite nonzero constant.  So then we have
\[
N^{-1/2}   p_N^{-1}   \gamma_{p_N}^{-3}  I^3\Gamma_N^{1/2} 
\sim N^{-1/2} p^{-1} I^{-1} p^{3 \alpha} I^{3} I p^{\alpha + 2}
= N^{-1/2} p^{4 \alpha +1} I^2.
\]
The parameter $\alpha$ is usually viewed as the smoothness of the $X_{i,n}$ processes.  We can see that for rougher processes, we can actually take larger panels and more principal components, since $\alpha$ will be smaller in these cases.  For example, $\alpha = 2$ for Brownian motion.  The same calculations will show that in the single panel case of Section \ref{ss:I=1}, the rate becomes
\begin{align*}
\frac{N^{-1/2} \sum_{j=1}^p \alpha_j^{-1} }{\lambda_p^2}
\sim N^{-1/2} p^{3 \alpha +2}.
\end{align*}
Since it must be the case that $\alpha > 1$, we can see the price we pay for the lack of structure in the panel as $ 3\alpha + 2 < 4\alpha + 1$.  This price increases for smoother processes, i.e., larger $\alpha$.

We now proceed to define the test statistic.
Let  $\hat v_{i, j}$  be the $j^\text{th}$ estimated functional principal
component (EFPC) of the $i^\text{th}$ functional time series, see, e.g.,
Chapter 3 of \citet{HKbook}. Set
\[
\widehat {\bf X}_{in} = [\widehat X_{1in},  \ldots, \widehat X_{p(i)in}]^\top, \ \ \ \
\widehat X_{jin} = \langle X_{i,n}, \hat v_{i,j} \rangle.
\]
Next, we form column vectors of length $p_N:=p= \sum_{i=1}^I p(i)$
given by
\[
\widehat {\bf X}_n
= [\widehat {\bf X}_{1n}^\top, \ldots, \widehat {\bf X}_{In}^\top]^\top.
\]
To form a portmanteau test statistic using the $\widehat \bX_n$,
we introduce
\begin{align*}
\widehat \bV_h = N^{-1} \sum_{n=1}^{N-h}
\widehat \bX_n \otimes \widehat \bX_{n+h}; \ \ \ \
\widehat \bC_0 =  N^{-1} \sum_{n=1}^N
\widehat \bX_n \widehat \bX_{n}^\top.
\end{align*}
Observe that $\widehat \bV_h$ is a column vector of length $p_N^2$ and
$\widehat \bC_0 \otimes \widehat \bC_0$ is a $p_N^2\times p_N^2$
symmetric matrix. The test statistic is defined by
\begin{align*}
\widehat Q_N & = N \sum_{h=1}^H \widehat \bV_h^\top
(\widehat \bC_0 \otimes \widehat \bC_0)^{-1} \widehat \bV_h.
\end{align*}
The summation limit $H$ plays the same role as the maximal number
of lags in the usual Box--Pierce--Ljung type statistics.
It is fixed in the asymptotic theory.

Our first result states that $\widehat Q_N$ is asymptotically  normal
under $H_0$, i.e.,\ when the data are iid.

\begin{theorem} \label{t:multi}
If Assumptions \ref{a:1} and \ref{a:pn2} hold, then under $H_0$,
\[
\frac{\widehat Q_N - p_N^2 H}{p_N \sqrt{2H}} \overset{\cD}{\to} \mathcal{N}(0,1).
\]
\end{theorem}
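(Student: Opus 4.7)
The plan is to reduce $\widehat Q_N$ to an oracle version built from the true principal component scores and the population covariance $\bC_{0,N}$, and then apply a central limit theorem for quadratic forms of growing dimension. First I would introduce the oracle analogues
\[
\bV_h = N^{-1} \sum_{n=1}^{N-h} \bX_n^\star \otimes \bX_{n+h}^\star, \qquad
\widetilde Q_N = N \sum_{h=1}^H \bV_h^\top (\bC_{0,N} \otimes \bC_{0,N})^{-1} \bV_h,
\]
and show that $(\widehat Q_N - \widetilde Q_N)/(p_N \sqrt{2H}) = o_P(1)$. Two sources of perturbation must be absorbed here: replacing $\hat v_{i,j}$ by $v_{i,j}$ (equivalently, $\widehat \bX_n$ by $\bX_n^\star$), and replacing $\widehat \bC_0$ by $\bC_{0,N}$. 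The first is handled by the classical Hilbert--Schmidt perturbation bound $\|\hat v_{i,j} - v_{i,j}\| = O_P(N^{-1/2} \alpha_{i,j}^{-1})$; propagating this through the double sum defining $\widehat \bV_h$ generates the quantity $\Gamma_N^{1/2}$. The second is controlled through the operator norm of $(\widehat \bC_0 \otimes \widehat \bC_0)^{-1} - (\bC_{0,N} \otimes \bC_{0,N})^{-1}$, which explains the cubic power of $\gamma_{p_N}^{-1}$ appearing in Assumption~\ref{a:pn2}. Combining these bounds with $I^3$ from counting cross terms across the $I$ panel series gives precisely the summability condition in Assumption~\ref{a:pn2}.

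Next I would analyze $\widetilde Q_N$ under $H_0$. Because the $\bX_n^\star$ are iid with mean zero, direct moment calculations show that $\sqrt{N}\, \bV_h$ has mean zero and covariance $\bC_{0,N} \otimes \bC_{0,N}$, and that $\bV_h$ and $\bV_{h'}$ are uncorrelated for $h \neq h'$. Hence each term $N \bV_h^\top (\bC_{0,N} \otimes \bC_{0,N})^{-1} \bV_h$ is approximately the squared norm of a zero-mean vector of length $p_N^2$ with identity covariance, which has mean $p_N^2$ and variance $2 p_N^2$ in the Gaussian limit. Summing over $h = 1, \ldots, H$ then yields mean $p_N^2 H$ and variance $2 p_N^2 H$, matching the centering and scaling in the statement of the theorem.

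The main obstacle is establishing a genuine central limit theorem for these quadratic forms when $p_N$ grows with $N$ --- this is the new CLT advertised in the abstract and developed in Section~\ref{s:multi}. I would attack it via a Lindeberg--Lyapunov argument on the whitened summands $(\bC_{0,N} \otimes \bC_{0,N})^{-1/2}(\bX_n^\star \otimes \bX_{n+h}^\star)$, verifying a fourth-moment Lyapunov condition using the uniform bound in Assumption~\ref{a:1} together with the lower bound $\gamma_{p_N}$ on the eigenvalues of $\bC_{0,N}$. The joint convergence across $h = 1, \ldots, H$ would then follow from the Cram\'er--Wold device applied to an arbitrary linear combination of the $H$ quadratic forms, leveraging the lag-wise orthogonality noted above. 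The truly delicate part is showing that the whitened summands do not concentrate on any single direction as $p_N \to \infty$; this is where the lower bound on $\gamma_{p_N}$ in Assumption~\ref{a:pn2} becomes indispensable, and it is the step that drives the overall growth condition on $p_N$.
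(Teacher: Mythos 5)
Your reduction step is essentially the paper's: you pass from $\widehat Q_N$ to an oracle statistic built from the true scores and $(\bC_{0,N}\otimes\bC_{0,N})^{-1}$, absorbing the eigenfunction perturbation through $\Gamma_N^{1/2}$ and the covariance-inverse perturbation through powers of $\gamma_{p_N}^{-1}$, with $I^3$ from the cross terms --- this is exactly what Lemmas~\ref{l:3} and \ref{l:4} (together with Lemma~\ref{l:props}) do, and Assumption~\ref{a:pn2} is indeed tailored to make those error terms $o_P(p_N)$. So the outer architecture of your argument is sound and matches the paper.

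The genuine gap is in the central limit theorem for the oracle statistic, which is the heart of the theorem (the paper's Theorem~\ref{t:norm}, applied in Lemma~\ref{l:2} after whitening $\bZ_{n,N}=\bC_{0,N}^{-1/2}\bX_n$). Your argument says each $N\bV_h^\top(\bC_{0,N}\otimes\bC_{0,N})^{-1}\bV_h$ is ``approximately the squared norm of a zero-mean identity-covariance vector, which has mean $p_N^2$ and variance $2p_N^2$ in the Gaussian limit.'' That is a two-step limit (first a finite-dimensional CLT for $\sqrt{N}\,\bV_h$, then a $\chi^2_{p_N^2}$ normal approximation), and it is precisely the argument that fails when $p_N\to\infty$ with $N$: a Lindeberg--Lyapunov condition on the summands $\bZ_{n-h,N}\otimes\bZ_{n,N}$ controls the CLT error only for fixed dimension, and no Berry--Esseen-type bound with usable dimension dependence is available --- the paper explicitly notes this is why earlier Prokhorov--L\'evy-metric approaches are avoided. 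What is actually required is a direct treatment of the degenerate quadratic form: after whitening, write
\begin{equation*}
N^{-1}\Bigl|\sum_{n} \bZ_{n-h,N}\otimes\bZ_{n,N}\Bigr|^2
= N^{-1}\sum_{n}|\bZ_{n-h,N}|^2|\bZ_{n,N}|^2
+ 2N^{-1}\sum_{n<n'}(\bZ_{n-h,N}^\top\bZ_{n'-h,N})(\bZ_{n,N}^\top\bZ_{n',N}),
\end{equation*}
where the diagonal part supplies the centering $p_N^2$ (a law of large numbers) and the off-diagonal part carries the Gaussian fluctuation of order $p_N$, proved via a martingale CLT with conditional-variance and Lindeberg-type control in growing dimension; this is where the conditions $p_N N^{-2/3}\to 0$ and $N^{-1/2}\E|\bZ_{1,N}|^4\to 0$ of Theorem~\ref{t:norm} enter, and both are consequences of Assumptions~\ref{a:1} and \ref{a:pn2} (via Lemma~\ref{l:bound} and the bound $\E|\bZ_{1,N}|^4\le \gamma_{p_N}^{-2}I^2M$). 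Your sketch never isolates this structure, and it misidentifies the delicate point: after whitening the covariance is exactly the identity, so ``non-concentration on a single direction'' is not the issue and $\gamma_{p_N}$ plays no role beyond the fourth-moment bound and the perturbation lemmas; moreover, Assumption~\ref{a:pn2} is a joint rate condition, not a lower bound on $\gamma_{p_N}$. Without the martingale (or equivalent) argument for the off-diagonal part, the key normality claim is unproven.
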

Theorem~\ref{t:multi} is proven in  Appendix~\ref{s:multi}. The proof 
involves a sequence of vectors of
projections of increasing dimension.  In
\citet{cardot:fms:2003} and \citet{horvath:huskova:rice:2013},
this problem is avoided by making extensive use of the Prokhorov--Levy
metric.  However, such a technique is limited due to the difficulty of
incorporating dimension into any Berry--Esseen type convergence result,
which typically rely on highly complex smoothing arguments.
Furthermore, such an approach typically does not yield results as
sharp as proving the CLT directly due to the way they depend on
dimension.  In contrast, our approach adds no additional assumptions
beyond those needed to replace the estimated eigenvalues and
eigenfunctions with their theoretical counterparts.  We therefore view
the following theorem, which establishes the asymptotic normality of
general quadratic forms based on autocorrelations, as an important
contribution of this paper.

\begin{theorem}\label{t:norm}
Let $\{\bZ_{n,N}, 1 \le n \le N\}$ be an array of random vectors with
$\bZ_{n,N} \in \mbR^{p_N}$.  For each $N$,
 assume that $\bZ_{1,N}, \dots, \bZ_{N,N}$ are iid and that
\begin{equation} \label{e:m-cond-Z}
\E[\bZ_{1,N}] ={ \bf{0}} \quad \mbox{and}
\quad \E[\bZ_{1,N} \bZ_{1,N}^\top]
= \bI_{p_N}.
\end{equation}
If, as $N \to \infty$,
\begin{equation} \label{e:p-cond-Z}
p_N \to \infty, \ \ \ p_N N^{-2/3} \to 0, \ \ \
{\rm and} \ \ \   N^{-1/2} \E| \bZ_{1,N}|^4 \to 0,
\end{equation}
then for $H$ fixed
\[
 \frac{N^{-1}\sum_{h=1}^H \left| \sum_{n=h+1}^{N} \bZ_{n-h,N}
\otimes \bZ_{n,N} \right|^2 - p_N^2 H}{p_N \sqrt{2H}}
\overset{\cD}{\to} \mathcal{N}(0,1).
\]
\end{theorem}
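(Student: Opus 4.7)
The plan is to apply a martingale CLT to the off-diagonal part of the quadratic statistic and dispose of the diagonal piece by direct moment bounds. Using $\langle \bu\otimes \bv,\bu'\otimes\bv'\rangle=\langle \bu,\bu'\rangle\langle \bv,\bv'\rangle$, expand
\[
|S_{h,N}|^2 = \sum_{n=h+1}^{N}|\bZ_{n-h,N}|^2|\bZ_{n,N}|^2 + 2\sum_{h+1\le n<m\le N}\langle \bZ_{n-h,N},\bZ_{m-h,N}\rangle\langle \bZ_{n,N},\bZ_{m,N}\rangle.
\]
Since $\E[|\bZ_{n-h,N}|^2|\bZ_{n,N}|^2]=p_N^2$, writing $Y_n:=|\bZ_{n,N}|^2-p_N$ gives
\[
\sum_{n}\bigl(|\bZ_{n-h,N}|^2|\bZ_{n,N}|^2-p_N^2\bigr)=p_N\sum_{n}(Y_{n-h}+Y_n)+\sum_{n}Y_{n-h}Y_n.
\]
With $\E Y_1^2\le\E|\bZ_{1,N}|^4=o(N^{1/2})$ and the $h$-dependent structure of $\{Y_{n-h}Y_n\}_n$, routine variance bounds yield $o_P(Np_N)$ for the centered diagonal sum, which is negligible after dividing by $Np_N\sqrt{2H}$.

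For the off-diagonal sum, reindex the triple sum over $(h,n,m)$ by the largest time $m$ to obtain a decomposition $\sum_{m=3}^{N}D_m$, where
\[
D_m:=2\sum_{h=1}^{\min(H,m-2)}\sum_{n=h+1}^{m-1}\langle \bZ_{n-h,N},\bZ_{m-h,N}\rangle\langle \bZ_{n,N},\bZ_{m,N}\rangle=2\langle \bZ_{m,N},\bW_m\rangle,
\]
with $\bW_m:=\sum_{h,n}\langle \bZ_{n-h,N},\bZ_{m-h,N}\rangle\bZ_{n,N}$ measurable with respect to $\cF_{m-1}$ for $\cF_m:=\sigma(\bZ_{1,N},\ldots,\bZ_{m,N})$. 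Independence and the mean-zero property of $\bZ_{m,N}$ make $\{D_m\}$ a martingale-difference sequence, so a standard martingale CLT will be applied. A Wick-style enumeration of index pairings shows that the leading contribution to $\sigma_N^2:=\sum_m\E D_m^2=4\sum_m\E|\bW_m|^2$ comes from the pairing $(h,n)=(h',n')$ and produces $\sigma_N^2=2Hp_N^2N^2(1+o(1))$; all other pairings collapse at least one $\bZ$-factor and are thus smaller by at least a factor of $p_N$.

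It remains to verify the two martingale-CLT hypotheses. The Lindeberg step is handled via Lyapunov: Cauchy--Schwarz gives $D_m^4\le 16|\bZ_{m,N}|^4|\bW_m|^4$, and combining the independence $\bZ_{m,N}\perp\cF_{m-1}$ with a direct expansion that yields $\E|\bW_m|^4=O(H^2m^2p_N^4)$ and the assumption $\E|\bZ_{1,N}|^4=o(N^{1/2})$ gives $\sum_m\E D_m^4=o(\sigma_N^4)$. The main obstacle is the conditional-variance condition
\[
\frac{1}{\sigma_N^2}\sum_{m=3}^{N}\E[D_m^2\mid\cF_{m-1}]=\frac{4}{\sigma_N^2}\sum_{m=3}^{N}|\bW_m|^2\xrightarrow{P}1,
\]
which reduces to $\mathrm{Var}\bigl(\sum_m|\bW_m|^2\bigr)=o(\sigma_N^4)$. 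Since $|\bW_m|^2$ is a quartic polynomial in $\bZ_{1,N},\ldots,\bZ_{m-1,N}$ that couples neighboring $m$'s through overlapping inner products, this variance expands into a sum over pairings of eight $\bZ$-indices, and the restriction $p_N N^{-2/3}\to 0$ is used precisely to control the worst surviving contractions. Combining asymptotic normality of $\sigma_N^{-1}\sum_m D_m$ with the negligibility of the diagonal then yields the stated standard normal limit.
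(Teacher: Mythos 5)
Your architecture is the natural one for this statement and almost certainly mirrors the paper's own route: peel off the diagonal $n=m$ terms (your $Y_n$ bookkeeping is correct, up to the harmless $O(Hp_N^2)$ mismatch between $(N-h)p_N^2$ and the centering $Np_N^2$, which you should at least note), and write the off-diagonal part as a sum of martingale differences $D_m=2\langle \bZ_{m,N},\bW_m\rangle$ with respect to $\cF_{m-1}$, with $\sigma_N^2=4\sum_m\E|\bW_m|^2=2HN^2p_N^2(1+o(1))$. That reduction, and the identification of the two conditions of the martingale CLT as what remains, is all fine.

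However, the two remaining verifications are exactly where the content of the theorem lies, and neither is actually established. First, your Lyapunov bound $\E|\bW_m|^4=O(H^2m^2p_N^4)$ does not follow from the hypotheses and can fail: the theorem only assumes $\E|\bZ_{1,N}|^4=o(N^{1/2})$, and $\E|\bZ_{1,N}|^4$ may be far larger than $p_N^2$ (e.g.\ sparse or heavy-tailed coordinates with slowly growing $p_N$); the fully coincident configurations in the expansion of $\E|\bW_m|^4$ contribute terms of order $Hm(\E|\bZ_{1,N}|^4)^3$, which dominate $H^2m^2p_N^4$ in such regimes. The conclusion $\sum_m\E D_m^4=o(\sigma_N^4)$ does survive, but only if you keep the fourth-moment factors explicit, e.g.\ $\sum_m\E D_m^4\lesssim \E|\bZ_{1,N}|^4\bigl(H^2N^3p_N^4+HN^2(\E|\bZ_{1,N}|^4)^3\bigr)=o(N^4p_N^4)$ using $p_N^2\le\E|\bZ_{1,N}|^4=o(N^{1/2})$. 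Second, and more seriously, the conditional-variance condition $\Var\bigl(\sum_m|\bW_m|^2\bigr)=o(\sigma_N^4)$ is simply asserted; "expands into a sum over pairings of eight indices" is a description of the task, not a proof, and your diagnosis of which hypothesis saves it is off: since $\E|\bZ_{1,N}|^4\ge(\E|\bZ_{1,N}|^2)^2=p_N^2$, the moment condition already forces $p_N=o(N^{1/4})$, so $p_NN^{-2/3}\to0$ is not the binding constraint; the contractions that survive (those carrying factors $\E|\bZ_{1,N}|^4$ or its square against index counts of order $N^3$) are controlled by $\E|\bZ_{1,N}|^4=o(N^{1/2})$ together with $p_N\to\infty$. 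Until that covariance expansion is carried out term by term and each surviving family is bounded against $N^4p_N^4$ using only the stated fourth-moment assumption (no eighth moments are available), the proof is incomplete at its central step.
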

The proof of Theorem~\ref{t:norm} is presented in the supplemental
material.
Limit results for random vectors with the dimension increasing with
the sample size appear in the asymptotic theory for empirical
likelihood, see, e.g., \citet{hjort:mckeague:vankeilegom:2009}
and \citet{peng:schick:2013}. The Central Limit Theorem
established by \citet{peng:schick:2012} is motivated by
such theory. Using the notation of Theorem~\ref{t:norm},
a corollary to their main result can be stated as
\[
\frac{N^{-1}\left | \sum_{n=1}^N \bZ_{n, N} \right |^2 - p_N}{\sqrt{2p_N}}
\overset{\cD}{\to}  \mathcal{N}(0,1).
\]
Their focus is not on the lagged Kronecker products, but on the case
where $\E[\bZ_{1,N} \bZ_{1,N}^\top]$ is a general covariance matrix (i.e.,\ not the identity),
and the centering is with respect to its trace.  Other results on CLT
convergence rates which incorporate dimension are given in
\citet{senatov:1998}.

We conclude this section with a general framework under which the test
rejects the null.  If the sequence is stationary and 
weakly dependent and at least
one element of the panel exhibits nonzero correlation with another
element (at some lagged time index), then the test will reject with
power approaching one. As a specific assumption for stationarity 
and weak dependence we use the concept of $L^p$--$m$--approximability, 
see \citet{HKbook}, Chapter 16. 

\begin{assumption}{(Alternative Hypothesis)} \label{a:HA}
 Assume that $\{(X_{i,1},\dots,X_{i,N})\}$ is a stationary $L^4$-m
 approximable sequence (for each $i$) and that there exists a nonempty
 subset of indices $\cH^\star \subset \{1,\dots,H\}$, and for each $h
 \in \cH^\star$ a nonempty subset of pairs of indies $\cI_h^\star
 \subset \{1, \dots, I\}^2$ such that 
 and $j \in \{1,\dots, I\}$ such that 
\[
	\left(\E[\langle X_{n}, v_{i}
		\rangle \langle X_{n+h}, v_{j} \rangle ]\right)^2
		\geq R > 0,
\] 
for all $h \in \cH^\star$ and $(i,j) \in \cI_h^\star$.
\end{assumption}

\begin{theorem} \label{t:multi:HA}
If Assumptions \ref{a:1}, \ref{a:pn2}, and \ref{a:HA} hold, then 
\[
\frac{\widehat Q_N - p_N^2 H}{p_N \sqrt{2H}} 
\gtrsim \gamma_1^{-2} R N (1+ o_P(1)) \sum_{h \in \cH^\star} | \cI_h^\star|
\overset{P}{\to} \infty.
\]
\end{theorem}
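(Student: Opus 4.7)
The plan is to spectrally lower-bound the quadratic form defining $\widehat Q_N$, isolate the components of $\widehat \bV_h$ that carry the non-zero cross-covariance signal postulated in Assumption~\ref{a:HA}, and use the ergodic properties of $L^4$-$m$-approximable sequences to identify their limits.

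First I would use the fact that the eigenvalues of $\widehat \bC_0 \otimes \widehat \bC_0$ are the products $\widehat\gamma_i \widehat\gamma_j$ with maximum $\widehat\gamma_1^2$, giving the quadratic-form bound
\[
\widehat \bV_h^\top (\widehat \bC_0 \otimes \widehat \bC_0)^{-1} \widehat \bV_h \ \geq\ \widehat\gamma_1^{-2}\, |\widehat \bV_h|^2.
\]
By the LLN for stationary $L^4$-$m$-approximable sequences applied to $\bX_n^\star \bX_n^{\star\top}$, $\widehat \bC_0 \to \bC_{0,N}$ in operator norm, so $\widehat\gamma_1^{-2} = \gamma_1^{-2}(1+o_P(1))$ by standard eigenvalue perturbation.

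Next I would lower-bound $|\widehat \bV_h|^2$ for $h\in\cH^\star$ by keeping only the entries indexed by pairs $(i,j)\in\cI_h^\star$. For each such pair and the associated EFPC indices, the relevant entry is $N^{-1}\sum_{n=1}^{N-h}\langle X_{i,n}, \hat v_i\rangle \langle X_{j,n+h}, \hat v_j\rangle$. Classical perturbation bounds (Chapter~3 of \citet{HKbook}) adapted to the weakly dependent setting yield $\|\hat v_i - c_i v_i\| = O_P(N^{-1/2})$ for appropriate signs $c_i\in\{\pm 1\}$, so this entry equals $N^{-1}\sum_n \langle X_{i,n}, v_i\rangle \langle X_{j,n+h}, v_j\rangle + o_P(1)$, which by the ergodic theorem converges in probability to $\E[\langle X_{i,n}, v_i\rangle\langle X_{j,n+h}, v_j\rangle]$. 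Assumption~\ref{a:HA} bounds its square below by $R$. Discarding every other nonnegative component yields
\[
|\widehat \bV_h|^2 \ \geq\ R\,(1+o_P(1))\,|\cI_h^\star|, \qquad h \in \cH^\star.
\]
Multiplying by $N\widehat\gamma_1^{-2}$ and summing over $h\in\cH^\star$ (dropping nonnegative contributions from $h\notin\cH^\star$) gives
\[
\widehat Q_N \ \geq\ N\,\gamma_1^{-2}\, R\, (1+o_P(1))\sum_{h\in\cH^\star}|\cI_h^\star|,
\]
which is the claimed bound. Divergence of $(\widehat Q_N - p_N^2 H)/(p_N\sqrt{2H})$ to $+\infty$ then follows because Assumption~\ref{a:pn2} forces $p_N = o(N^{1/2})$, so the signal of order $N/p_N$ dominates the centering contribution of order $p_N H$.

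The principal technical obstacle is the EFPC replacement step: classical perturbation bounds control $\hat v_i - c_i v_i$ only pair by pair, but here the error must be negligible compared with the signal uniformly over the pairs in $\cI_h^\star$ and under weak (rather than independent) dependence. Because $\cI_h^\star$ is a fixed collection (independent of $N$), only finitely many principal directions with strictly positive eigengaps are involved, so the analysis is strictly simpler than the uniform replacement across all $p_N$ coordinates already executed in the proof of Theorem~\ref{t:multi}; the requisite LLN and eigenfunction consistency for $L^4$-$m$-approximable sequences are standard (see \citet{HKbook}, Chapter~16).
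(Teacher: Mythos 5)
Your argument is essentially the paper's own: bound the quadratic form below through the largest eigenvalue of $\widehat\bC_0\otimes\widehat\bC_0$, discard all coordinates of $\widehat\bV_h$ except those indexed by $\cI_h^\star$, and identify the surviving sample cross-covariances via the ergodic theorem for stationary $L^4$-$m$-approximable scores, yielding $\widehat Q_N \geq \gamma_1^{-2} N R\,(1+o_P(1))\sum_{h\in\cH^\star}|\cI_h^\star|$. Your ``local'' treatment of the estimation error (replacing $\hat v_i$, $\hat v_j$ and $\hat\gamma_1$ only for the finitely many signal coordinates) is a legitimate and arguably cleaner substitute for the paper's appeal to analogues of Lemmas~\ref{l:3} and~\ref{l:4}; the paper's remark that $\|\Delta_{i,\ip,h}\|$ becomes $O_P(N^{1/2})$ while $\widehat C_{ii}$ stays root-$N$ consistent covers the same ground you handle with the pairwise perturbation bounds.

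The one loose step is the final divergence claim. You argue that $p_N=o(N^{1/2})$ (which Assumption~\ref{a:pn2} indeed implies, and more) makes a ``signal of order $N/p_N$'' dominate the centering of order $p_N H$, but this silently drops the factor $\gamma_1^{-2}$, which is \emph{not} bounded away from zero in general: by Lemma~\ref{l:bound} one only knows $\gamma_1 \leq \sum_j \gamma_j \leq I M^{1/2}$, and under strong cross-sectional correlation $\gamma_1$ can grow with $I\to\infty$. The comparison that is actually needed, and that the paper states, is that $N\gamma_1^{-2}$ grows faster than $p_N^2$; this does follow from your assumptions, since $\gamma_1 \leq I M^{1/2}\leq p_N M^{1/2}$ while Assumption~\ref{a:pn2}, combined with the lower bounds $\gamma_{p_N}^{-1} I \geq p_N M^{-1/2}$ (Lemma~\ref{l:bound}) and $\Gamma_N^{1/2}\geq 2 p_N M^{-1/2}$ (each eigenvalue gap is at most $M^{1/2}$), forces $N^{-1/2}p_N^3\to 0$ and hence $N/p_N^4\to\infty$. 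So your conclusion stands, but the last step needs this extra line rather than the bare rate $p_N=o(N^{1/2})$.
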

\noindent Theorem~\ref{t:multi:HA} is proven in  Appendix~\ref{s:multi}

In the next section, we consider the case of a single series to illustrate 
 our assumptions. We emphasize that $H$ is assumed 
to be fixed, but can be arbitrarily large. Asymptotic under $H$ diverging 
to infinity (for a single series) were investigated by
\citet{horvath:huskova:rice:2013}. It should, in principle,  be possible 
to let the number of panel series, $I$, the number of projections
$p$ and the maximal lag $H$ tend simultaneously to infinity, but we do not 
develop such a more complex theory here. We thus stay within the 
framework  of traditional time series analysis where asymptotics are
derived for a finite number of lags, see, e.g, Chapter 7 of 
\citet{brockwell:davis:1991}. 

\subsection{Case of $I=1$ (a single functional series)} \label{ss:I=1}
To provide a more tangible intuition behind the form of the statistic
$\widehat Q_N$, we discuss the simpler scenario where we only have one
time series, i.e., $I=1$.  Define
\[
\Delta_{N,h} = N^{-1/2} \sum_{n=1}^{N-h}  X_n \otimes X_{n+h}.
\]
The autocovariance operator $\Delta_{N,h}$ is Hilbert--Schmidt. Recall
 that Hilbert--Schmidt operators form a separable Hilbert space with
the inner product
\begin{equation} \label{e:HS-prod}
\lip \Psi_1, \Psi_2 \rip_\cS = \sum_{k} \lip \Psi_1(e_k), \Psi_2(e_k) \rip,
\end{equation}
where $\lbr e_k \rbr$ is any orthonormal basis, see, e.g., Chapter 2 of
\citet{HKbook}.  A direct application of this definition with the
functions $\hv_k$ (extended to a complete system) shows that
\[
\langle \Delta_{N,h},  \hat v_j \otimes \hat v_{\jp} \rangle_\cS
= N^{-1/2} \sum_{n=1}^{N-h} \widehat X_{jn} \widehat X_{\jp, n+h}.
\]
Therefore, by Lemma 7.1 of \citet{HKbook}, the statistic
$\widehat Q_N$ can be expressed as
\[
\widehat Q_N = \sum_{h=1}^H \sum_{j, \jp=1}^{p_N}
\frac{\langle \Delta_{N,h},  \hat v_j \otimes \hat v_{\jp} \rangle_\cS^2}
{\hat \lambda_j \hat \lambda_{\jp}}.
\]
The summands are the squares of the sample cross--correlations of
all projections
under consideration. These are added over all projections and all lags up
to lag $H$.

In the case of a single time series, Assumption~\ref{a:pn2} can be replaced
by a more interpretable assumption:
\begin{assumption}\label{a:pn}
We assume that the sequence $p_N$ is nondecreasing,
$p_N \to \infty$, and satisfies
\[
\frac{N^{-1/2} \sum_{j=1}^{p_N} \alpha_j^{-1}  }
{ \lambda_{p_N}^2} \to 0,
\]
where $\alpha_1 = \lambda_1 - \lambda_2$ and for $j \geq 2$,
$\alpha_j
= \min\{\lambda_{j-1} - \lambda_j, \lambda_j - \lambda_{j+1}  \}$.
\end{assumption}
Assumption \ref{a:pn} quantifies the intuition that $p$ should
increase to infinity at a rate slower than $N$, depending on the rate
of decay of the eigenvalues $\lambda_j$ and the gaps
between them. Direct verification shows
that if the $\lambda_j$ decay exponentially fast, then $p$ must increase
 at a rate slower than $\ln N$. If the $\lambda_j$ decay like a
power function, Assumption~\ref{a:pn} will hold if $\ln(p)/\ln(N) \to
0$.
The proof of Theorem~\ref{t:multi} for $I=1$ is presented in the
supplemental material. It is less abstract than the general proof
in Appendix~\ref{s:multi};  its study may facilitate the understanding
of the general case.

\subsection{Details of implementation} \label{ss:fsi}
In this section, we provide a step by step description of the testing
procedure. All steps listed below can be easily implemented in {\tt R}
or {\tt Matlab} using basic routines and the functional principal
component tool box (see \citet{ramsay:hooker:graves:2009}).
Supplemental material contains a ready to use {\tt R}
function implementing the test. For step 6, we provide two alternative
but equivalent procedures where the first might be more intuitive
while the second is computationally more efficient. The finite sample
bias correction in step 7 follows from an extension of the arguments
of \citet{ljung:box:1978}.  It is clearly asymptotically negligible.

\begin{enumerate}
\item Center each functional time series, i.e., compute
\[
X_{i, n}^c (t)  = X_{i,n}(t)  - \hat\mu_i(t), \ \ \
 \hat\mu_i(t) = N^{-1} \sum_{n=1}^N X_{i,n}(t).
\]
\item Calculate the eigenfunctions $\hat v_{i,j}$ and the eigenvalues
$\hat \lambda_{i,j}$ of the empirical covariance operator defined by
\[
\widehat C_i (x) = \frac{1}{N} \sum_{n=1}^N
\langle X_{i,n}^c, x \rangle X_{i,n}^c.
\]
(This step is implemented as \verb|pca.fd| in {\tt R} and as \verb|pca_fd|
in {\tt Matlab}. Both functions, by default,
 center their arguments as in step 1.)
\item For each $1 \le i \le I$,  determine $p(i)$ as the smallest $k$ for
which
\[
\frac{\sum_{j=1}^k\hat\lambda_{i,j}}
{\sum_{j=1}^{N}\hat\lambda_{i,j}} > 0.85.
\]
\item Construct  the vectors of scores
\[
\widehat {\bf X}_{in} = [\widehat X_{1in}, \ldots, \widehat X_{p(i)in}]^\top, \ \ \ \
\widehat X_{jin} = \langle X_{i,n}, \hat v_{i,j} \rangle.
\]
and the vectors
\[
\widehat {\bf X}_n
= [\widehat {\bf X}_{1n}^\top, \ldots, \widehat {\bf X}_{In}^\top]^\top.
\]
Using the vectors $\{\widehat {\bf X}_n \}$, generate the $N \times p_N$ matrix
\[
\widehat{\bf X} = [\widehat{\bf X}_1,..., \widehat{\bf X}_N]^\top
\]
and calculate the empirical covariance matrix
\[
\widehat{\bf C}_0 = N^{-1} \widehat{\bf X}^\top \widehat{\bf X} .
\]
\item
Calculate  the spectral decomposition
\[
\widehat \bC_0 = {\bf U} {\bf D} {\bf U}^\top,
\]
where ${\bf D}$ is the diagonal matrix of eigenvalues and
${\bf U}$ the matrix of eigenvectors.
Let $d_1, \dots, d_{p_N}$ be the eigenvalues of $\widehat \bC_0$.
Choose a cutoff point $q$ defined as  the smallest integer for which
\[
\frac{\sum_{i=1}^q d_i}{\sum_{i=1}^{p_N} d_i} \geq 0.85.
\]
Set  ${\bf D}^{-1}[i,i] = d_i^{-1}$ if $1\leq i \leq q$ and
zero otherwise. Calculate the generalized inverse
\[
\widehat \bC_0^{-1} = {\bf U} {\bf D}^{-1} {\bf U}^\top.
\]
(Note that $\widehat \bC_0$ might be singular, e.g., when $p_N>N$.)
\item
Using the vectors $\{\widehat {\bf X}_n \}$ from step 4, compute the terms
\[
\widehat \bV_h = N^{-1} \sum_{n=1}^{N-h}
\widehat \bX_n \otimes \widehat \bX_{n+h}
\]
for $1 \leq h \leq H$, where $H$ is chosen by the user.
(A discussion of this issue is presented at the end of Section~\ref{ss:sim}.)

The test statistic
\[
\widehat Q_N = N \sum_{h=1}^H \widehat \bV_h^\top
(\widehat \bC_0 \otimes \widehat \bC_0)^{-1} \widehat \bV_h.
\]
can be calculated using 
$(\widehat \bC_0 \otimes \widehat \bC_0)^{-1}
\approx {\widehat \bC_0}^{-1} \otimes {\widehat \bC_0}^{-1}$,
where ${\widehat \bC_0}^{-1}$ is the generalized inverse from step 5.
\vspace{\baselineskip}

An alternative procedure for step 6 which avoids the Kronecker product and
therefore requires less computational resources is the following:\\
For each  $h, 1 \leq h \leq H$, take two submatrices of the matrix $\widehat{\bf X}$
defined in step 4,
\begin{align*}
\widehat{\bf X}_{-h} = [\widehat{\bf X}_1,..., \widehat{\bf X}_{N-h}]^\top \\
\widehat{\bf X}_{+h} = [\widehat{\bf X}_{1+h},..., \widehat{\bf X}_{N}]^\top
\end{align*}
and construct  the matrices
\[
	\widehat{\bf M}_h = N^{-1}  ( \widehat{\bf X}_{-h}^\top \widehat{\bf X}_{+h} )^\top ,\,\,\, h=1,\, ...,\, H.
\]
(Note that the vectorized form $\text{vec}\!\! \left(\widehat{\bf M}_h \! \right)$ is equivalent to the
vectors $\widehat{\bf V}_h$ defined before.)

Calculate  the test statistic as
\[
\widehat Q_N = N \sum_{h=1}^H
	\text{vec} \!\! \left(\widehat{\bf M}_h\right)^\top
	\text{vec} \!\! \left(
			\left(\widehat \bC_0^{-1}\right)^\top \widehat{\bf M}_h \, \widehat \bC_0^{-1}
	\right),
\]
where ${\widehat \bC_0}^{-1}$ is the generalized inverse from step 5.

\item Reject the null hypothesis  at significance level $0< \alpha < 1$, if
\[
\frac{\widehat Q_N - q^2 H \left(1-\frac{H+1}{2N}\right)}{q\sqrt{2H \left(1-\frac{H+1}{2N}\right)}}
> \Phi^{-1}(1-\alpha),
\]
where $q$ is determined in step 5 and where $\Phi^{-1}(1-\alpha)$ is the
$(1-\alpha)$th quantile of the standard normal distribution.
\end{enumerate}

\section{Applications and finite sample performance}
\label{s:fsp}
As discussed in Section~\ref{s:int}, there are many examples of
functional panels, with various temporal and cross-sectional
dependence structures, and various shapes of the curves. This 
paper focuses on methodology and theory. It is therefore not 
possible to present a simulation study which covers the wide range  
of possibly relevant scenarios. However, rather than considering 
some ad hoc artificial data generating processes (DGP), we focus on 
three real data sets taken from climate studies and then 
simulate panels whose random structure resembles the one of these 
real data sets closely. Our goal is to evaluate the performance of 
the test in realistic settings and so to provide additional 
guidance for its application.

\subsection{Application to climate data} \label{ss:data}
We consider three climate data sets with different
values of $I$ and $N$ and different levels of noise.
Each of them consists of $N$ annual curves at $I$ locations.
Before describing these data  in more detail,
we provide the following summary.
\begin{enumerate}
\item {\bf El Ni\~{n}o SST:} $N=63, \ I=4$,\  smooth.
\item {\bf US precipitation:} $N=113, \  I = 103$,\  noisy.
\item {\bf German temperature:} $N=61,\  I=42$,\  noisy.
\end{enumerate}

El Ni\~{n}o is a phenomenon of semi--periodic variation
of sea surface temperature (SST) in the southern Pacific Ocean.  The
phenomenon is measured by an index for SST variation in several
regions, generally referred to as Ni\~{n}o-1+2, Ni\~{n}o-3,
Ni\~{n}o-4, and Ni\~{n}o-3.4, see, e.g., \citet{trenberth:1997}, and
\citet{trenberth:stepaniak:2001}.  Data with monthly SST
measurements for all four regions from 1950 onwards are available
online:\\
{\tt http://www.cpc.ncep.noaa.gov/data/indices/ersst3b.nino.mth.81-10.ascii}.
Panels with $I$ comparable to $N$ arise in regional climate studies.
As an example, we use monthly precipitation data from the United
States Historical Climatology Network for all stations in California,
Arizona, and New Mexico, a region known for reoccurring precipitation
deficits. Screening for completeness yields a panel of $I=103$
stations from 1901 to 2013, $N=113$.  Increasing all records by 0.01
inch (which is the smallest discrete unit of measurement for these
data) allows us to use a log-transformation which leads to
approximately normal data.  (Any invertible transformation preserves
the iid property stated as $H_0$.)  As our last example, we consider
daily temperature data at $I=42$ weather stations in Germany over $N=
61$ years (accessible online at: {\tt www.dwd.de}).

To remove long term trends that would lead to a rejection, all data
were detrended by fitting a least squares line to observations at each
location and each month. This operation had a visible effect only on
the German temperature data, as illustrated in Figure
\ref{fig:august}. Arguably, more sophisticated methods 
of long term behavior modeling could be used, methods, where testing the 
residuals for independence might be crucial to conclude on the model 
validity. However, our objective is not a climatological study, but merely 
an illustration of statistical methodology which can be applied to 
residual curves obtained from more complex models.

\begin{figure}[hbt]%
\centering
\includegraphics[width=0.6\columnwidth]{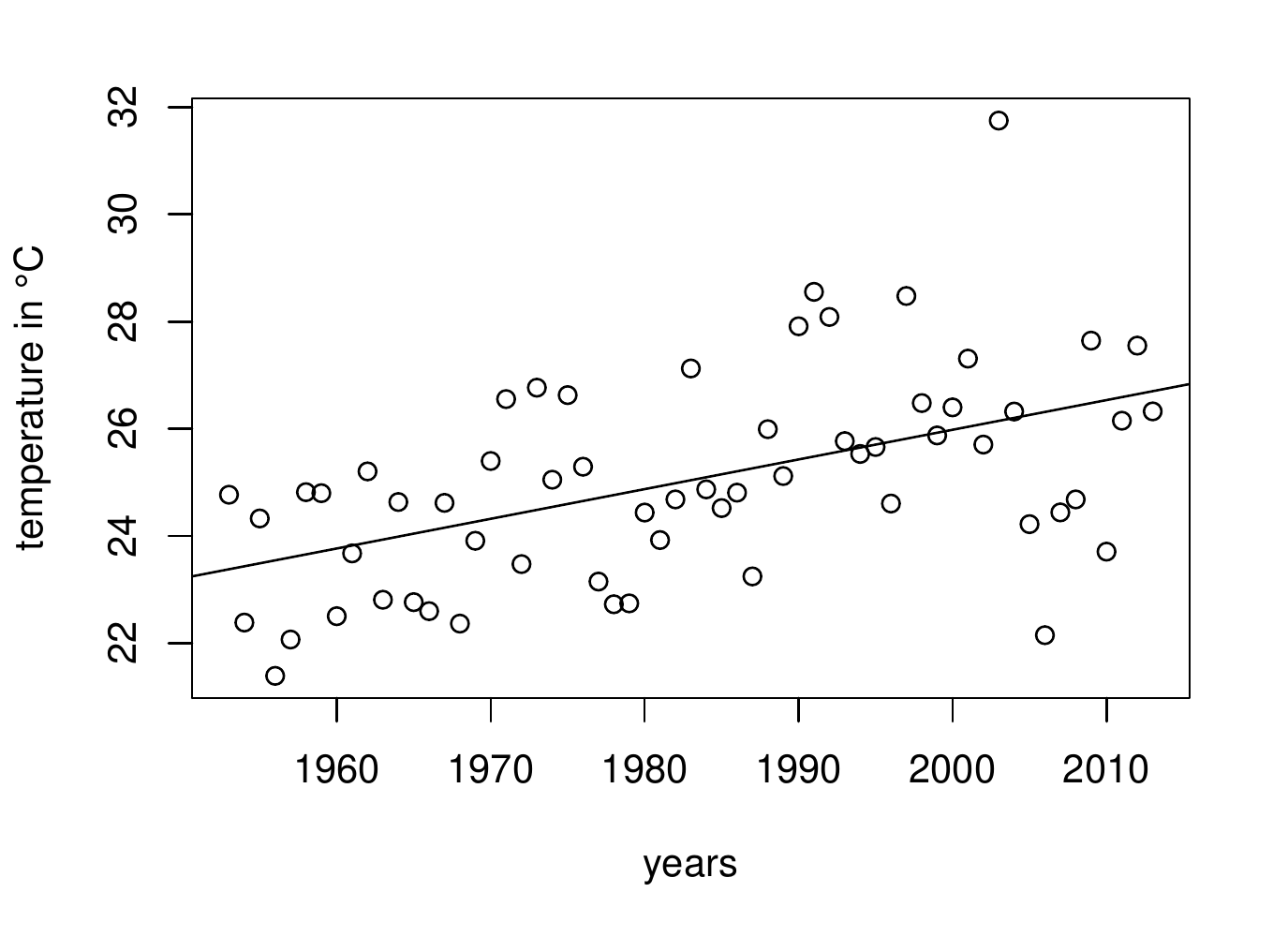}%
\caption{Monthly averages of maximum daily temperature in the month of August, 1953 to
2013, in Mannheim, Germany. Note the very high record of almost
$32^\circ\text{C}$ during the European heat wave in Summer 2003.}%
\label{fig:august}%
\end{figure}

The El Ni\~{n}o indices are already smoothed regional averages and do not
require further smoothing. We therefore expand the monthly
measurements using a B-spline basis of order four with one knot placed
at each month, the expansion closely matches the data points.
Figure~\ref{f:ninoregions} shows the spline-expansion for the year
2012, while Figure~\ref{f:ninofullsample} exhibits the SST
measurements in each region over the whole span of the sample.  The
data from individual weather stations in the US and in Germany are
noisy. Due to pronounced annual periodicity, we expand each annual
curve using a Fourier series with 25 basis functions and apply the
harmonic acceleration roughness penalty. An example for the US smoothed,
log-transformed precipitation data is given in Figure \ref{f:SantaCruz}.  
The smoothness parameter was chosen to minimize the standard 
generalized cross--validation criterion. Details of the procedure 
are described in Section 5.3 of \citet{ramsay:hooker:graves:2009}. 

We apply the test to the detrended and smoothed curves.  As described
in Section~\ref{ss:fsi}, Step 3, the numbers $p(i)$ are determined
separately for each time series $i$ such that at least 85\% of
variance within the time series is captured by the $p(i)$ principal
components. For the El Ni\~{n}o SST curves, this requires two
principal components for each region.  For the log-precipitation and
the temperature data, the same criterion selects three principal
components for each station.  The fact that all time series within a
panel are assigned the same number of principal components underlines
the fact that the climate measures we use have structurally similar
data generating processes. This, however, does not necessarily need to
be the case for different panels. Our testing procedure accounts
 for such a possibility by
choosing $p(i)$ individually for each $i$. Additional simulations show
that the test is very robust to the choice of the cut-off criterion of
this first dimension reduction.  
The second principal component analysis described in Step 5,
Section~\ref{ss:fsi}, determines the number of dimensions $q$ for each
sample that are finally used to derive our test statistic. The 85\% of
variance criterion selects $q=2$ for the El Ni\~{n}o data with
$I=4$. It selects $q=3$ for the German temperature data with $I=42$,
and $q=37$ for the US precipitation data with $I=103$. The fact that
the second dimension reduction for the German temperature data selects
just 3 principal components for the whole panel can be attributed to
the rather dense geographical coverage and thus large homogeneity of
the cross-sectional temperature curves in this sample.

\begin{figure}[htb]
\centering
\includegraphics[width=\columnwidth]{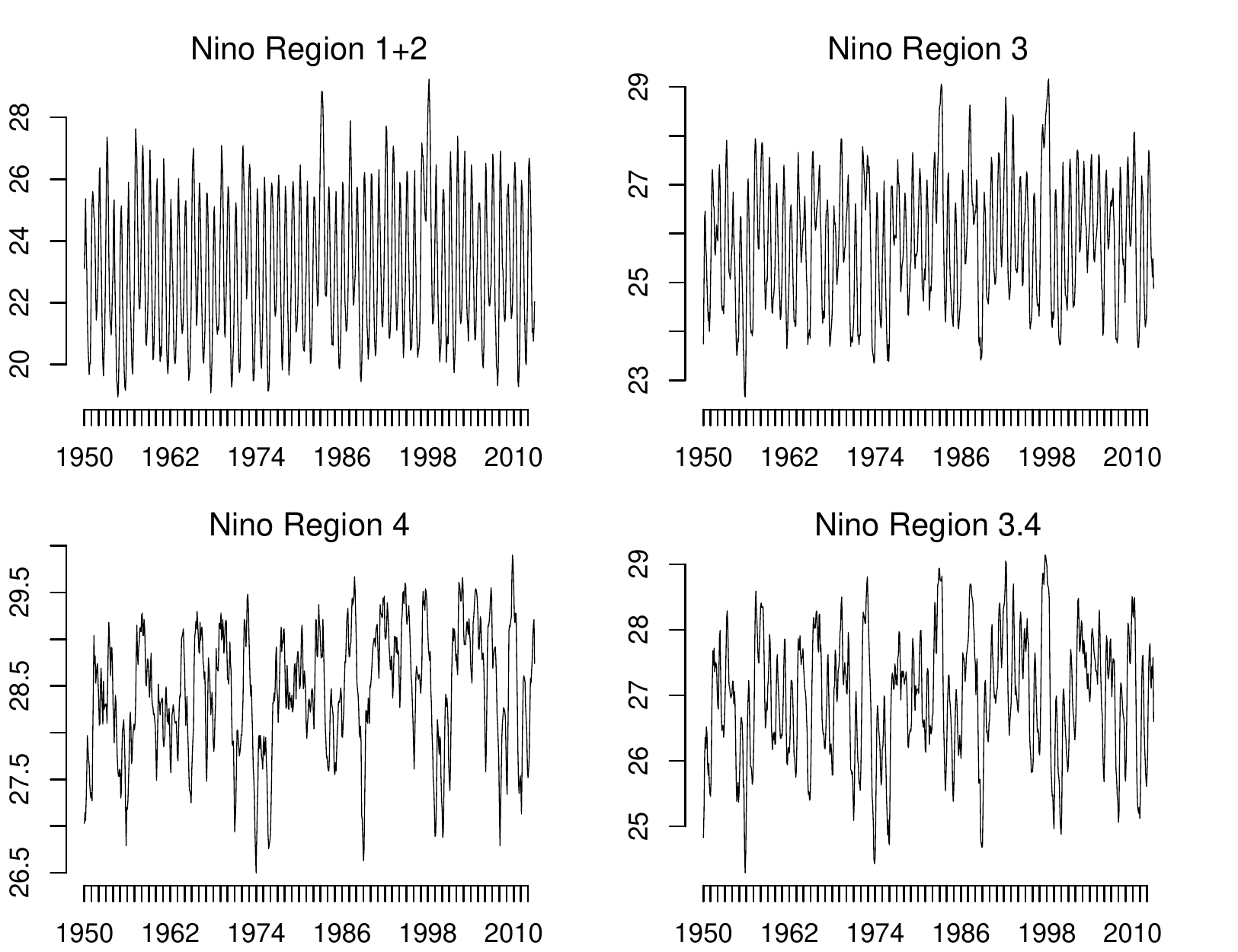}
\caption{Sea Surface Temperature in four El Ni\~{n}o regions
 from 1950 to 2012}%
	\label{f:ninofullsample}%
\end{figure}

\begin{table}[htb]
\centering
\caption{Test results (normalized $\widehat{Q}_N$ and $p$\,-values)
for the three data sets:
El Ni\~{n}o SST curves with $I=4$, $N=63$;
US log-precipitation data with $I=103$, $N=113$;
German temperature data with $I=42$, $N=61$.}
\label{t:testresults}
\begin{tabular}{l|rr|rr|rr}
\hline\hline
& \multicolumn{2}{c|}{El Ni\~{n}o SST}
& \multicolumn{2}{c|}{US precip.}
& \multicolumn{2}{c}{German temp.} \\
& stat. & $p$\,-value & stat. & $p$\,-value & stat. & $p$\,-value\\
\hline
$H=1$	&	13.483	&	$<0.001$	&	1.704	&	0.044	&	-0.701	&	0.758	\\
$H=2$	&	11.778	&	$<0.001$	&	2.080	&	0.019	&	-1.000	&	0.841	\\
$H=3$	&	10.282	&	$<0.001$	&	2.827	&	0.002	&	-0.138	&	0.555	\\
$H=4$	&	 8.973	&	$<0.001$	&	2.587	&	0.005	&	 0.193	&	0.424	\\
$H=5$	&	 7.617	&	$<0.001$	&	2.714	&	0.003	&	-0.349	&	0.636	\\
$H=6$	&	 6.508	&	$<0.001$	&	2.762	&	0.003	&	-0.119	&	0.547	\\
$H=7$	&	 5.756	&	$<0.001$	&	2.696	&	0.004	&	 0.066	&	0.474	\\
$H=8$	&	 5.142	&	$<0.001$	&	3.060	&	0.001	&	 0.324	&	0.373	\\
$H=9$	&	 4.883	&	$<0.001$	&	3.008	&	0.001	&	 0.011	&	0.496	\\
$H=10$&	 4.750  &	$<0.001$	&	2.871	&	0.002	&	 0.231	&	0.408	\\
\hline\hline
\end{tabular}
\end{table}

Table~\ref{t:testresults} shows the normalized test statistic and
the corresponding $p$\,-values for $H=1,...,10$ for all three panels.
The El Ni\~{n}o anomaly typically happens at irregular intervals of 
three to six, sometimes seven years, thus, importance should be
given to the results for these lags.
However, the test rejects $H_0$ for the SST panel for all lags tested
at any reasonable level of significance.  
For the South--West precipitation data, the rejection is also 
convincing, even though less strong.  Both data sets reflect
known semi--periodic cycles extending over several years, and this is
a  likely reason for the rejections.  For the German temperature data,
in contrast, there is no evidence for a violation of
$H_0$. After a simple detrending, the panel of annual temperature
curves over Germany can be taken to consist of iid observations.  Due
to a dense spatial coverage, one might say that after local trends
have been removed, the annual temperature pattern over Germany can be
treated each year as an independent replication.

\subsection{Simulation scheme and finite sample performance}
\label{ss:sim}
We now use the estimated stochastic structure of the  panels
described in Section~\ref{ss:data} to generate artificial functional
panels.  This section serves a twofold purpose: we want to validate
the conclusions implied by Table \ref{t:testresults},  and we want to
evaluate the empirical size and power of the test in realistic settings.

\subsubsection{Simulation scheme}
We first describe the procedure to simulate functional panels which
closely resemble the original climate data, but do not violate the null
hypothesis. Then, we explain how we generate panels with increasing
temporal dependence.
Functional panel data are expected to exhibit correlation of curves 
for a fixed period $n$ (`between' individuals, i.e., stations or regions), 
as well as dependence over periods for a fixed individual $i$ 
(`within' a time series).  
The following data generating process features the first type of between 
correlation but excludes dependence over time:
For every individual station or region $i$, we calculate the
empirical mean function $\hat{\mu}_i(t)$ together with $k=1,...,12$
empirical principal components $\hat{v}_{k,i}(t)$ (12 is the maximum
number of EFPC's for these data).  We calculate the score
$\hat{\xi}_{k,i,n}$ for each principal component $k$, for every
individual $i$, and for every year $n=1,...,N$.  Let
\[
\sigma_k(i, \ip) = \frac{1}{N-1} \sum_{n=1}^N
\lp \hat\xi_{k,i,n} - \bar \xi_{k,i} \rp
\lp \hat\xi_{k,\ip,n} - \bar \xi_{k,\ip} \rp
\]
and
\[
\bSig_k = \lb \sigma_k(i, \ip), 1 \le i, \ip \le I \rb.
\]
The matrix $\bSig_k$ is the empirical covariance matrix
of the scores from different individuals. Calculate
its Cholesky decomposition $\bSig_k = \bL_k \bL_k^\top$ and
obtain simulated scores of the form
\[
\boldsymbol{\zeta}_k =  \mathbf{z}_k \bL^\top_k, \quad 1 \le k \le 12,
\]
where $\mathbf{z}_k$ is an $N \times I$ matrix of independent
standard normal random variables. Each matrix
$\boldsymbol{\zeta}_k$, $1\leq k\leq 12$,
has $I$ column vectors $\boldsymbol{\zeta}_{k,i}$
of length $N$ which are correlated among each other much like
the scores of the individual stations or regions from the original
climate data. With these scores, the data generating process
for the simulated iid functional panel is
\[
\mathbf{X}^{H_0}_{i}(t) = \hat{\mu}_i(t) + \sum_{k=1}^{12} \boldsymbol{\zeta}_{k,i} \hat{v}_{k,i}(t), \quad i=1,...,I,
\]
where each $\mathbf{X}^{H_0}_{i}(t)$ is a vector of random curves
of length $N$ and the superscript $H_0$ indicates that the artificial
sample satisfies the null hypothesis of independence. The essence of
the above procedure is that if the original data satisfied $H_0$ and
were normal, then the data generating process would have the same random
structure as the estimated structure of the data. Normal QQ--plots show
that the scores of the three data sets in question are approximately
normal ($i, k$ fixed, $N$  points per plot).

To construct an  alternative to $H_0$, we impose autocorrelation on
each time series in the form of a functional autoregressive process of
order 1,  FAR(1) (see Chapters 3 and 4 of
\citet{bosq:2000}, or Chapter 13 of \citet{HKbook}).
An appropriate  Cholesky factor $\mathbf{L}_{\text{ac}}$ is defined
as follows:
Choose $\rho\neq0$, $-1<\rho<1$, the level of autocorrelation to
be imposed (one could also specify different levels of $\rho$ for
each $k$). Construct an $N \times N$ Toeplitz matrix
such that the first column corresponds to the sequence
$\{\rho^{n-1}\}_{n=1,...,N}$.
For the Cholesky factor $\mathbf{L}_{\text{ac}}$, take the lower
triangular of this Toeplitz matrix and divide each element by
$[(\rho^{2n}-1)/(\rho^2-1)]^{1/2}$, where $n=1,...,N$
denotes the row number of the corresponding element. This ensures
that $\mathbf{L}_{\text{ac}} \mathbf{L}_{\text{ac}}^\top$ is positive
semi-definite and has all diagonal elements equal to $1$. Thus,
applying this factor to a vector of length $N$ imposes autocorrelation
among the vectors' elements but does not change the
overall level of variance. With the generic scores defined before,
the data generating process for the autocorrelated functional
panel is
\[
\mathbf{X}^{\text{ac}}_{i}(t) = \hat{\mu}_i(t)
				+ \sum_{k=1}^{12} \mathbf{L}_{\text{ac}}\boldsymbol{\zeta}_{k,i} \hat{v}_{k,i}(t)
				, \quad i=1,...,I.
\]

In summary, we can generate siblings of functional panels of the type
$\{X^{H_0}_{i,n}(t), X^{\text{ac}}_{i,n}(t) \}$,
$i=1,...,I$, $n=1,...,N$, whose cross--sectional dependence structure
is the same and similar to that of the real data, but where one sample
obeys the hypothesis of independence while the other sample follows an
explicit FAR(1) process. This procedure allows us to vary the length of
the panel, $N$.

\subsubsection*{Finite sample performance}
To evaluate the empirical size and power, we simulate $R=10^3$
replications of panels with $I=4$, $I=42$, and $I=103$, and with the
cross--sectional dependence structure resembling that of the
respective data sets. We report results for the length $N=60$ and
$N=120$ (years), typical sample sizes encountered in historical
climate data.  We test for $H=3,...,6$ which is the relevant range of
years for which dependence in the El Ni\~no driven climatic measures
is expected.  Table~\ref{t:empiricalsize} shows the point estimates of
the rejection frequency for the simulation under $H_0$ together with
the Clopper--Pearson confidence intervals for the probability of
success (see \citet{clopper:pearson:1934}).\footnote{ Clopper--Pearson
confidence intervals are almost identical to the confidence intervals
based on the normal approximation to the binomial distribution, except
for cases of empirical power close to 1 when the right end point of
the latter exceeds 1.}

\begin{table}[hb]
\centering
\caption{Rejection frequencies (and confidence bands) for the test
with $\alpha=0.05$ obtained from 1000 simulations under $H_0$ for
each of the data sets.}
\label{t:empiricalsize}
\begin{tabular}{l|ll|ll}
\multicolumn{5}{c}{El Ni\~{n}o data, $I=4$, i.i.d.}\\
  \hline\hline
& \multicolumn{2}{l|}{$N=60$}   & \multicolumn{2}{l}{$N=120$}  \\
  \hline
  $H=3$ & 0.055 & (0.042, 0.071) & 0.053 & (0.040, 0.069) \\
  $H=4$ & 0.071 & (0.056, 0.089) & 0.051 & (0.038, 0.067) \\
  $H=5$ & 0.060 & (0.046, 0.077) & 0.061 & (0.047, 0.078) \\
  $H=6$ & 0.063 & (0.049, 0.080) & 0.057 & (0.043, 0.073) \\
  \hline
\multicolumn{5}{c}{} \\
\multicolumn{5}{c}{US precipitation data, $I=103$, i.i.d.}\\
  \hline\hline
& \multicolumn{2}{l|}{$N=60$}   & \multicolumn{2}{l}{$N=120$}  \\
  \hline
  $H=3$ & 0.033 & (0.023, 0.046) & 0.033 & (0.023, 0.046) \\
  $H=4$ & 0.046 & (0.034, 0.061) & 0.041 & (0.030, 0.055) \\
  $H=5$ & 0.054 & (0.041, 0.070) & 0.048 & (0.036, 0.063) \\
  $H=6$ & 0.079 & (0.063, 0.097) & 0.058 & (0.044, 0.074) \\
  \hline
\multicolumn{5}{c}{} \\
\multicolumn{5}{c}{German temperature data, $I=42$, i.i.d.}\\
  \hline\hline
& \multicolumn{2}{l|}{$N=60$}   & \multicolumn{2}{l}{$N=120$}  \\
  \hline
	$H=3$ & 0.062 & (0.048, 0.079) & 0.061 & (0.047, 0.078) \\
  $H=4$ & 0.060 & (0.046, 0.077) & 0.052 & (0.039, 0.068) \\
  $H=5$ & 0.063 & (0.049, 0.080) & 0.054 & (0.041, 0.070) \\
  $H=6$ & 0.059 & (0.045, 0.075) & 0.054 & (0.041, 0.070) \\
  \hline\hline
\end{tabular}
\end{table}

The empirical sizes reported in Table~\ref{t:empiricalsize} validate
the results obtained in Section~\ref{ss:data}.  For the DGPs we
considered, the test has overall a satisfactory, often excellent,
empirical size.  The evaluation of power is more subjective as it
depends on the distance of the DGP from $H_0$. For every
simulated panel following $H_0$, we obtained an autocorrelated sibling
with a fixed level of autocorrelation: $\rho=0.38$ for $I=4$,
$\rho=0.37$ for $I=42$, and $\rho=0.19$ for $I=103$.  These are the
correlation levels for which the power is almost or exactly equal to
1 if $N=120$. In light of these moderate levels of autocorrelation
and the rejection frequencies reported in Table~\ref{t:powertable}, we
conclude that our test has excellent power together with good
empirical size, such that the rejections as well as the non--rejection
of $H_0$ for the data presented in Section~\ref{ss:data} provide
reliable insights.

\begin{table}[htb]
\centering
\caption{Rejection frequencies (and confidence bands) for the test with $\alpha=0.05$ obtained from 1000 simulations of an AR(1) process for each of the data sets; $\rho$ indicates the degree of autocorrelation.}
\label{t:powertable}
\begin{tabular}{l|ll|ll}
\multicolumn{5}{c}{El Ni\~{n}o SST curves, $I=4$, $\rho=0.38$}\\
  \hline\hline
& \multicolumn{2}{l|}{$N=60$}   & \multicolumn{2}{l}{$N=120$}  \\
  \hline
	 $H=3$ & 0.987 & (0.978, 0.993) & 1.000 & (0.996, 1.000) \\
  $H=4$ & 0.927 & (0.909, 0.942) & 1.000 & (0.996, 1.000) \\
  $H=5$ & 0.778 & (0.751, 0.803) & 1.000 & (0.996, 1.000) \\
  $H=6$ & 0.607 & (0.576, 0.637) & 0.997 & (0.991, 0.999) \\
	\hline
\multicolumn{5}{c}{} \\
\multicolumn{5}{c}{US precipitation data, $I=103$, $\rho=0.19$}\\
  \hline\hline
& \multicolumn{2}{l|}{$N=60$}   & \multicolumn{2}{l}{$N=120$}  \\
  \hline
  $H=3$ & 0.951 & (0.936, 0.964) & 1.000 & (0.996, 1.000) \\
  $H=4$ & 0.981 & (0.970, 0.989) & 1.000 & (0.996, 1.000) \\
  $H=5$ & 0.994 & (0.987, 0.998) & 1.000 & (0.996, 1.000) \\
  $H=6$ & 0.994 & (0.987, 0.998) & 1.000 & (0.996, 1.000) \\
\hline
\multicolumn{5}{c}{} \\
\multicolumn{5}{c}{German temperature curves, $I=42$, $\rho=0.37$}\\
  \hline\hline
& \multicolumn{2}{l|}{$N=60$}   & \multicolumn{2}{l}{$N=120$}  \\
  $H=3$ & 0.790 & (0.763, 0.815) & 0.996 & (0.990, 0.999) \\
  $H=4$ & 0.690 & (0.660, 0.719) & 0.986 & (0.977, 0.992) \\
  $H=5$ & 0.615 & (0.584, 0.645) & 0.974 & (0.962, 0.983) \\
  $H=6$ & 0.554 & (0.523, 0.585) & 0.966 & (0.953, 0.976) \\
  \hline\hline
\end{tabular}
\end{table}

In our simulation study, the power generally decreases with $H$, but
it increases with $H$ for the DGP mimicking the US precipitation
panel. This agrees with the $p$\,-values reported in
Table~\ref{t:testresults}.  The issue of the selection of $H$ is a
difficult one, and it is not satisfactorily solved even for the
standard Ljung--Box--Pierce test for a single scalar time series.
Statistical software packages display the $p$\,-values, often in the
form of a graph, as a function of $H$. If for some relatively large
range of $H$ the $p$\,-values are above the 5\% level, $H_0$ is
accepted, if they are below, $H_0$ is rejected. In mixed cases, the
test is found to be inconclusive.  The same strategy can be followed
in the application of our test.  In addition, some background
knowledge of the science problem may be utilized. In the SST and US
precipitation examples, the temperature and rainfall patterns are
known to reoccur every 3--6 years, so importance was attached to these
lags $H$.


\appendix

\section{Proof of Theorems~\ref{t:multi} and \ref{t:multi:HA}} \label{s:multi}
The plan of the proof of Theorem~\ref{t:multi} is as follows.
In Lemma~\ref{l:2}, we show that the convergence to the
normal limit holds if instead of projections on the
EFPCs $\hv_{i,j}$, projections on the $v_{i,j}$ are used.
Recall that $v_{i,j}$ is the $j^\text{th}$ FPC of the $i^\text{th}$ functional
time series in the panel.
The statistic constructed using the $v_{i,j}$ is denoted by
$Q_N$.
The proof of Lemma~\ref{l:2} relies on Theorem~\ref{t:norm}.
Next, we show in Lemmas \ref{l:3} and \ref{l:4} that the
the transition from $Q_N$ to $\widehat Q_N$ involves
asymptotically negligible terms. Lemma~\ref{l:props}
collects several properties referred to  in the proofs.  {The proof of Theorem \ref{t:multi:HA} is given at the end of this section.}

Let $\bX_{i,n}$ be  the  column vector of length $p(i)$ defined by
$
\bX_{i,n} = [\langle X_{i,n}, v_{i,j} \rangle, \ 1 \le j \le p(i)]^\top.
$
By stacking these $I$ vectors on top of each other, we
construct a column vector of length $p= \sum_{i=1}^I p(i)$
defined by
\[
\bX_n = [ \bX_{1,n}^\top, \ldots, \bX_{I,n}^\top]^\top.
\]
We abuse notation slightly as $\bX_n$ was used earlier to reference the functional panel vector, but throughout this section $\bX_n$ will be defined as above.  We allow the number of time series, $I$, and/or the number of
FPCs, $p(i)$, used for each series to increase with the temporal sample
size, $N$, in any way which implies that $p=p_N$ increases to
infinity. Recall that
$
\bC_{0,N} = \E \lb\bX_n \bX_n^\top \rb
$
is the $p_N\times p_N$ covariance matrix whose eigenvalues are
$\gamma_1 \ge \ldots \ge \gamma_{p_N}$.
Our first lemma contains two bounds involving the $\gamma_j$,
which will be used throughout the proof of Theorem~\ref{t:multi}.

\begin{lemma}\label{l:bound}  We have the bounds
\[
\sum_{j=1}^{p_N} \gamma_j \leq I M^{1/2}
\ \ \  \mbox{\rm and} \ \ \
p_N \leq \gamma_{p_N}^{-1} I M^{1/2},
\]
where $M$ is the bound in Assumption~\ref{a:1}.
\begin{proof}
Notice that
\[
\sum_{j=1}^{p_N} \gamma_j = \mbox{trace}(\bC_{0,N}) = \E |\bX_n|^2
 \leq \sum_{i=1}^I \E \|X_{i,n}\|^2.
\]
Applying Jensen's inequality and Assumption \ref{a:1} gives the first claim.
We then immediately obtain the second claim since
$
\gamma_{p_N} p_N \leq \sum_{j=1}^{p_N} \gamma_j.
$
\end{proof}
\end{lemma}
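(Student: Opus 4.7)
The plan is to reduce everything to a trace computation plus Bessel's inequality. First I would observe that for any positive semidefinite $p_N \times p_N$ matrix the sum of eigenvalues equals the trace, so $\sum_{j=1}^{p_N}\gamma_j = \mathrm{tr}(\bC_{0,N})$. Since $\bC_{0,N} = \E[\bX_n \bX_n^\top]$, exchanging trace and expectation gives $\mathrm{tr}(\bC_{0,N}) = \E\,\mathrm{tr}(\bX_n \bX_n^\top) = \E|\bX_n|^2$.

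Next I would unpack $|\bX_n|^2$ using the block structure of the stacked vector: $|\bX_n|^2 = \sum_{i=1}^I \sum_{j=1}^{p(i)} \langle X_{i,n}, v_{i,j}\rangle^2$. For each fixed $i$, the $v_{i,j}$ are an orthonormal system in $L^2$ (they are eigenfunctions of a self-adjoint compact operator), so Bessel's inequality gives $\sum_{j=1}^{p(i)} \langle X_{i,n}, v_{i,j}\rangle^2 \le \|X_{i,n}\|^2$. Therefore $\E|\bX_n|^2 \le \sum_{i=1}^I \E\|X_{i,n}\|^2$. To convert a fourth-moment bound into a second-moment bound I would apply Jensen's inequality (equivalently Cauchy–Schwarz): $\E\|X_{i,n}\|^2 \le (\E\|X_{i,n}\|^4)^{1/2} \le M^{1/2}$ by Assumption~\ref{a:1}. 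Summing over the $I$ panel indices yields the first displayed bound.

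For the second bound, since $\gamma_{p_N}$ is the smallest eigenvalue and all $\gamma_j \ge \gamma_{p_N}$, we get $p_N \gamma_{p_N} \le \sum_{j=1}^{p_N}\gamma_j \le I M^{1/2}$, and dividing by $\gamma_{p_N}$ (which is strictly positive, since the lemma is only informative when $\bC_{0,N}$ is nonsingular) produces the claim. There is no real obstacle here; the only thing to verify carefully is that Bessel's inequality is applicable block by block, i.e.\ that $\{v_{i,j}\}_{j}$ is orthonormal in $L^2$ for each fixed $i$, which follows from the standard spectral theory of the covariance operator of $X_{i,n}$.
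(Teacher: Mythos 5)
Your proposal is correct and follows essentially the same route as the paper: trace of $\bC_{0,N}$ equals $\E|\bX_n|^2$, which is bounded by $\sum_i \E\|X_{i,n}\|^2$ (the paper leaves the block-wise Bessel step implicit), then Jensen/Cauchy--Schwarz converts the fourth-moment bound of Assumption~\ref{a:1} into $M^{1/2}$, and the second claim follows from $p_N\gamma_{p_N}\le\sum_j\gamma_j$. Your write-up simply makes the orthonormality and trace-expectation exchanges explicit.
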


\begin{lemma}  \label{l:2}
If Assumptions \ref{a:1} and \ref{a:pn2} hold then,
 under $H_0$,
\[
\frac{Q_N - p_N^2 H}{p_N \sqrt{2H}} \overset{\cD}{\to} \mathcal{N}(0,1).
\]
\end{lemma}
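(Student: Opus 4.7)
The plan is to reduce Lemma~\ref{l:2} to the general quadratic-form CLT stated in Theorem~\ref{t:norm} by standardizing the panel scores. Since under $H_0$ the vectors $\bX_n$ are iid with mean zero and covariance $\bC_{0,N}$, define
\[
\bZ_n = \bC_{0,N}^{-1/2} \bX_n,
\]
so that the $\bZ_n$ are iid with $\E[\bZ_1]=\mathbf{0}$ and $\E[\bZ_1 \bZ_1^\top] = \bI_{p_N}$, matching condition (\ref{e:m-cond-Z}).

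The next step is an algebraic simplification of $Q_N$ using the Kronecker identities $(A x)\otimes(B y) = (A \otimes B)(x \otimes y)$ and $(A \otimes B)^{-1} = A^{-1} \otimes B^{-1}$. Writing $\bX_n = \bC_{0,N}^{1/2}\bZ_n$ and setting $\bW_h = N^{-1} \sum_{n=1}^{N-h} \bZ_n \otimes \bZ_{n+h}$, I would push $\bC_{0,N}^{1/2} \otimes \bC_{0,N}^{1/2}$ through the inverse Kronecker factor $(\bC_{0,N} \otimes \bC_{0,N})^{-1}$ to collapse each summand of $Q_N$ to $|\bW_h|^2$, obtaining
\[
Q_N = N^{-1}\sum_{h=1}^H \left| \sum_{n=1}^{N-h} \bZ_n \otimes \bZ_{n+h} \right|^2,
\]
which is exactly the statistic appearing on the left-hand side of Theorem~\ref{t:norm} (after a re-indexing of the outer sum).

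With $Q_N$ in this canonical form, it remains to verify the hypotheses (\ref{e:p-cond-Z}) of Theorem~\ref{t:norm}. The condition $p_N \to \infty$ is part of Assumption~\ref{a:pn2}. The dimension bound $p_N N^{-2/3} \to 0$ and the moment condition $N^{-1/2}\E|\bZ_1|^4 \to 0$ must be extracted from Assumption~\ref{a:pn2} using Lemma~\ref{l:bound}. For the moment bound I would use $|\bZ_n|^2 = \bX_n^\top \bC_{0,N}^{-1} \bX_n \leq \gamma_{p_N}^{-1}|\bX_n|^2$, together with Cauchy--Schwarz and Assumption~\ref{a:1}, to get $\E|\bZ_1|^4 \leq \gamma_{p_N}^{-2} I^2 M$. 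To bridge from this to Assumption~\ref{a:pn2}, I would use the crude lower bound $\Gamma_N^{1/2} \gtrsim p_N$ (obtained from $\alpha_{i,j}^{-1} \gtrsim 1$, since each $\alpha_{i,j} \leq \lambda_{i,1}$ is bounded by Assumption~\ref{a:1}) together with $p_N \gamma_{p_N} \leq I M^{1/2}$ from Lemma~\ref{l:bound}; their ratio shows $N^{-1/2}\gamma_{p_N}^{-2}I^2 = o(1) \cdot \bigl(N^{-1/2}p_N^{-1}\gamma_{p_N}^{-3}I^3 \Gamma_N^{1/2}\bigr)$, which is $o(1)$ by Assumption~\ref{a:pn2}. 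The same bounds yield $N^{-1/2} p_N^3 = O(N^{-1/2}p_N^{-1}\gamma_{p_N}^{-3}I^3\Gamma_N^{1/2}) \to 0$, which is stronger than $p_N N^{-2/3} \to 0$.

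The main obstacle is the last paragraph: Assumption~\ref{a:pn2} is stated in a form tailored to the eigenstructure of the panel, while Theorem~\ref{t:norm} needs bounds on $\E|\bZ_1|^4$ in a dimension-free manner. The delicate point is choosing lower bounds for $\Gamma_N$ and $\gamma_{p_N}^{-1}$ that are tight enough to bridge the two forms without imposing any extra structural conditions, which is precisely where Lemma~\ref{l:bound} and the mild spectral gap estimate enter. Once this bridge is established, the conclusion is immediate from Theorem~\ref{t:norm}.
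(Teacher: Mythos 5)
Your proposal is correct and follows essentially the same route as the paper's proof of Lemma~\ref{l:2}: standardize via $\bZ_{n,N}=\bC_{0,N}^{-1/2}\bX_n$, collapse $Q_N$ to the statistic of Theorem~\ref{t:norm} by the mixed-product property of the Kronecker product, bound $\E|\bZ_{1,N}|^4\le \gamma_{p_N}^{-2}I^2M$ via Cauchy--Schwarz and the operator norm, and check $p_NN^{-2/3}\to 0$ through Lemma~\ref{l:bound}. Your explicit bridge $\Gamma_N^{1/2}\gtrsim p_N$ (from $\alpha_{i,j}\le\lambda_{i,1}\le M^{1/2}$) combined with $p_N\gamma_{p_N}\le IM^{1/2}$ is exactly the reasoning the paper leaves implicit when it invokes Assumption~\ref{a:pn2}, so no gap remains.
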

\begin{proof}
The standardized vectors used in Theorem~\ref{t:norm} are given by
\[
\bZ_{n,N} = \bC^{-1/2}_{0,N} \bX_{n}.
\]
Using the mixed--product property of the Kronecker product, we can express $Q_N$ as
\begin{align*}
& N^{-2} \sum_{n=1}^{N-h} \sum_{\np=1}^{N-h} (\bX_n \otimes \bX_{n+h}) ^\top(\bC_{0,N} \otimes \bC_{0,N})^{-1} (\bX_n \otimes \bX_{n+h}) \\
= & N^{-2} \sum_{n=1}^{N-h} \sum_{\np=1}^{N-h} (\bX_n \otimes \bX_{n+h}) ^\top(\bC_{0,N}^{-1/2} \otimes \bC_{0,N}^{-1/2})
(\bC_{0,N}^{-1/2} \otimes \bC_{0,N}^{-1/2})
(\bX_n \otimes \bX_{n+h}) \\
= & N^{-2} \sum_{n=1}^{N-h} \sum_{\np=1}^{N-h} (\bZ_{n,N} \otimes \bZ_{n+h,N})^\top (\bZ_{n,N} \otimes \bZ_{n+h,N}) \\
= & N^{-2} \sum_{n=1}^{N-h} \sum_{\np=1}^{N-h} \left| \bZ_{n,N} \otimes \bZ_{n+h,N} \right|^2.
\end{align*}
So we need to establish that
\[
N^{-1/2} \E|\bZ_{1,N}|^4 \to 0.
\]
By definition we have that
\[
|\bZ_{1,N}|^4 =  (\bX_{1}^\top \bC^{-1}_{0,N} \bX_{1})^2.
\]
Applying the Cauchy--Schwarz and operator norm inequality we have that
\[
\bX_{1}^\top \bC^{-1}_{0,N} \bX_{1} \leq | \bX_{1} |  |\bC^{-1}_{0,N}
 \bX_{1}| \leq | \bX_1|^2 \| \bC^{-1}_{0,N}  \|.
\]
Since $\|\bC^{-1}_{0,N} \|$ is the largest eigenvalue of
$\bC^{-1}_{0,N} $, it is simply the reciprocal of the smallest
eigenvalue of $\bC_{0,N} $.  Therefore we have
\[
|\bZ_{1,N}|^4 \leq | \bX_{1}|^4 \gamma_{p_N}^{-2}.
\]
The norm of $\bX_{1}$ can be expressed as
\[
|\bX_{1}|^2 = \sum_{i}^I \sum_{j}^{p(i)} \langle X_{i,1}, v_{j,i} \rangle^2
\leq \sum_{i}^I \|X_{i,n}\|^2.
\]
A final application of the Cauchy--Schwarz inequality yields
\[
|\bZ_{1,N}|^4
\leq \gamma_{p_N}^{-2} I \sum_{i=1}^I \|X_{i,1}\|^4.
\]
Taking expected values we have that
\[
\E|\bZ_{1,N}|^4 \leq  \gamma_{p_N}^{-2} I^2 M.
\]
Therefore by Assumption \ref{a:pn2}
\[
N^{-1/2} \E|\bZ_{1,N}|^4 = N^{-1/2}  \gamma_{p_N}^{-2} I^2 =   o(1).
\]
Finally, to apply Theorem \ref{t:norm} we need only to show that $p_N N^{-2/3} \to 0$.  Using Lemma \ref{l:bound} we have that
\[
p_N N^{-2/3}\leq N^{-2/3} \gamma_{p_N}^{-1} I M^{1/2}.
\]
Which is clearly $o(1)$ by Assumption \ref{a:pn2}.
\end{proof}

For the next lemma it will be notationally useful to define the lag $h$ cross covariance operators:
\[
\Delta_{i,\ip,h} = N^{-1/2} \sum_{n=1}^{N - h} X_{i,n} \otimes X_{\ip, n+h}.
\]

\begin{lemma}  \label{l:3}
If Assumptions \ref{a:1} and \ref{a:pn2} hold then,
 under $H_0$,
\[
\frac{Q_N - N \sum_{h=1}^H \widehat V_h^\top
(\bC_{0,N} \otimes \bC_{0,N})^{-1} \widehat V_h}{p_N \sqrt{2H}}
= o_P(1).
\]
\end{lemma}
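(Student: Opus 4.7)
\textbf{Proof plan for Lemma \ref{l:3}.}

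The plan is to compare the two quadratic forms term by term in $h$ and reduce the problem to controlling $V_h - \widehat V_h$ by a perturbation argument on the estimated eigenfunctions. Write $\bA = (\bC_{0,N}\otimes\bC_{0,N})^{-1}$ and $V_h = N^{-1}\sum_{n=1}^{N-h}\bX_n\otimes\bX_{n+h}$, so that $Q_N = N\sum_{h=1}^H V_h^\top \bA V_h$. Since $\bA$ is symmetric, the elementary identity
\[
V_h^\top \bA V_h - \widehat V_h^\top \bA \widehat V_h
= (V_h - \widehat V_h)^\top \bA (V_h + \widehat V_h)
\]
holds for each $h$. As $\bA$ is positive definite, Cauchy--Schwarz in the $\bA$--inner product gives
\[
N\bigl|V_h^\top \bA V_h - \widehat V_h^\top \bA \widehat V_h\bigr|
\le \bigl(N(V_h-\widehat V_h)^\top \bA(V_h-\widehat V_h)\bigr)^{1/2}
\bigl(N(V_h+\widehat V_h)^\top \bA(V_h+\widehat V_h)\bigr)^{1/2}.
\]

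For the second factor, Lemma~\ref{l:2} together with the normalization $Q_N - p_N^2 H = O_P(p_N)$ implies $N V_h^\top \bA V_h = O_P(p_N^2)$ for each fixed $h$, so it suffices to show $N\widehat V_h^\top \bA \widehat V_h = O_P(p_N^2)$ and apply the triangle inequality; the latter will fall out of the same estimates used for the first factor, once we prove $N(V_h - \widehat V_h)^\top \bA (V_h-\widehat V_h) = o_P(p_N^2)$. Granting this, the displayed product is $o_P(p_N^2)$, hence $o_P(p_N\sqrt{2H})$ for $H$ fixed, which is what the lemma claims.

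The core of the proof is therefore the bound on the first factor, which I would obtain by a standard eigenfunction perturbation argument. Choose signs $\hat\epsilon_{i,j}\in\{-1,+1\}$ so that $\|\hat v_{i,j} - \hat\epsilon_{i,j} v_{i,j}\|\le 2\sqrt{2}\,\alpha_{i,j}^{-1}\|\widehat C_i - C_i\|$, the classical Dauxois--Pousse--Romain inequality (see, e.g., Lemma 2.3 of \citet{HKbook}); under Assumption~\ref{a:1}, $\E\|\widehat C_i - C_i\|^2 = O(N^{-1})$. Using
\[
X_{jin}X_{j'i'n+h} - \hat\epsilon_{i,j}\hat\epsilon_{i',j'}\widehat X_{jin}\widehat X_{j'i'n+h}
= \langle X_{i,n},\, v_{i,j} - \hat\epsilon_{i,j}\hat v_{i,j}\rangle \hat\epsilon_{i',j'}\widehat X_{j'i'n+h}
+ \hat\epsilon_{i,j}\widehat X_{jin}\langle X_{i',n+h},\, v_{i',j'} - \hat\epsilon_{i',j'}\hat v_{i',j'}\rangle,
\]
I would expand the $(i,j,i',j')$--coordinate of $V_h - \widehat V_h$ (up to sign factors that drop out in the quadratic form), control each factor by $\|X_{i,n}\|\cdot\alpha_{i,j}^{-1}\|\widehat C_i-C_i\|$, and bound the quadratic form in $\bA$ by the operator-norm estimate $\bA \preceq \gamma_{p_N}^{-2}\bI$. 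Summing the resulting bounds over $(i,j,i',j')$ produces exactly the quantity $\Gamma_N$ defined before Assumption~\ref{a:pn2}, yielding something of the order $\gamma_{p_N}^{-2}\,I^2\,\Gamma_N\,N^{-1}$ in expectation, which under Assumption~\ref{a:pn2} is $o_P(p_N^2/N)$ after multiplication by $N$.

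The main technical obstacle is the bookkeeping in the last step: showing that after the sign cancellations inherent to the symmetric structure $\widehat X_{jin}^2 = X_{jin}^2 + \text{(perturbation)}$, the sum really collapses to $\Gamma_N$ rather than a looser bound depending on a higher power of the $\alpha_{i,j}^{-1}$'s; this requires carefully exploiting the tensor structure $\bA = \bC_{0,N}^{-1}\otimes\bC_{0,N}^{-1}$ and the factor of $I^2$ that arises from the double Cauchy--Schwarz across the panel index, matching the $I^3$ factor inside Assumption~\ref{a:pn2}. Once this is handled, a direct application of Markov's inequality, combined with the assumed rate $N^{-1/2}p_N^{-1}\gamma_{p_N}^{-3}I^3\Gamma_N^{1/2}\to 0$, delivers $N(V_h-\widehat V_h)^\top \bA(V_h-\widehat V_h) = o_P(p_N^2)$ and finishes the proof.
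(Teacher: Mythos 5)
Your overall skeleton matches the paper's: the identity $V_h^\top \bA V_h - \widehat V_h^\top \bA \widehat V_h = (V_h-\widehat V_h)^\top \bA (V_h+\widehat V_h)$, a Cauchy--Schwarz bound, and Dauxois-type eigenfunction perturbation estimates that collapse the fourfold sum to $\Gamma_N$ (the paper routes this through Lemma~\ref{l:props}, with $\max_{i,\ip}\|\Delta_{i,\ip,h}\|^2=O_P(I^2)$ and $\max_i\|\widehat C_{ii}-C_{ii}\|^2=O_P(N^{-1}I)$, which is where its factor $I^3$ comes from; your per-$i$ bound $\E\|\widehat C_i-C_i\|^2=O(N^{-1})$ ignores the uniformity over $i$, but that is a secondary bookkeeping issue). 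The genuine gap is in your final order-of-magnitude accounting. The lemma requires the numerator $Q_N - N\sum_h \widehat V_h^\top(\bC_{0,N}\otimes\bC_{0,N})^{-1}\widehat V_h$ to be $o_P(p_N)$, but your argument only targets $o_P(p_N^2)$: you aim to show $N(V_h-\widehat V_h)^\top\bA(V_h-\widehat V_h)=o_P(p_N^2)$, pair it with the second Cauchy--Schwarz factor of exact order $O_P(p_N^2)$ (indeed $NV_h^\top\bA V_h\approx p_N^2$ under $H_0$ by Lemma~\ref{l:2}), and conclude the product is ``$o_P(p_N^2)$, hence $o_P(p_N\sqrt{2H})$.'' That last step is a non sequitur: since $p_N\to\infty$, $o_P(p_N^2)$ does not imply $o_P(p_N)$. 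With the second factor genuinely of order $p_N^2$, you must show the first factor is $o_P(1)$, i.e.\ $N(V_h-\widehat V_h)^\top\bA(V_h-\widehat V_h)=o_P(1)$, not merely $o_P(p_N^2)$ --- a gap of a full factor $p_N$ in the target.

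The good news is that your own estimates essentially deliver the stronger statement once carried through: $N(V_h-\widehat V_h)^\top\bA(V_h-\widehat V_h)\le N\gamma_{p_N}^{-2}\|V_h-\widehat V_h\|^2 = O_P\!\left(N^{-1}\gamma_{p_N}^{-2}I^3\Gamma_N\right)$, and using Lemma~\ref{l:bound} ($p_N\gamma_{p_N}\le IM^{1/2}$, together with $p_N\ge I$) one checks this is $o_P(1)$ under Assumption~\ref{a:pn2}; alternatively, follow the paper and bound the cross term directly by $\gamma_{p_N}^{-2}\|V_h-\widehat V_h\|\,\|V_h+\widehat V_h\|$ with $\|V_h-\widehat V_h\|=O_P(N^{-1}I^{3/2}\Gamma_N^{1/2})$ and $\|V_h+\widehat V_h\|=O_P(N^{-1/2}I)$ (Parseval plus Lemma~\ref{l:props}), so that after multiplying by $N$ and dividing by $p_N$ one gets $N^{-1/2}p_N^{-1}I^{5/2}\gamma_{p_N}^{-2}\Gamma_N^{1/2}O_P(1)=o_P(1)$ directly from Assumption~\ref{a:pn2}. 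Note also that the paper controls $\|\widehat V_h\|$ by Bessel's inequality with the estimated eigenfunctions rather than deducing $N\widehat V_h^\top\bA\widehat V_h=O_P(p_N^2)$ from Lemma~\ref{l:2}, which only concerns the true-eigenfunction statistic; your planned triangle-inequality detour is fixable but unnecessary.
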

\begin{proof}
A minor rearrangement yields
\[
\frac{Q_N - N \sum_{h}^H \widehat V_h^\top (\bC_{0,N} \otimes \bC_{0,N})^{-1} \widehat V_h}{p_N \sqrt{2H}}
 = \frac{2 N \sum_{h}^H ( V_h - \widehat V_h )^\top (\bC_{0,N} \otimes \bC_{0,N})^{-1} (V_h+\widehat V_h)}{p_N \sqrt{2H}}.
\]
The Cauchy--Schwarz and operator norm inequality yield
\begin{align*}
  |( V_h - \widehat V_h )^\top (\bC_{0,N} \otimes \bC_{0,N})^{-1} (V_h+\widehat V_h)|
  \leq \|V_h - \widehat V_h\| \|V_h + \widehat V_h\| \gamma_{p_N}^{-2}
\end{align*}
For each coordinate of $V_h$, there exists $i,j, \ip, \jp$ such that the coordinate can be expressed as
 \[
  N^{-1} \sum_{n=1}^{N-h} \langle X_{i,n}, v_{j,i} \rangle  \langle X_{i^\prime,n+h}, v_{j^\prime,i^\prime} \rangle
  =  N^{-1} \sum_{n=1}^{N-h} \langle X_{i,n} \otimes X_{i^\prime,n+h}, v_{j,i}\otimes v_{j^\prime,i^\prime} \rangle
  = N^{-1/2} \langle \Delta_{i,\ip,h} , v_{j,i}\otimes v_{j^\prime,i^\prime} \rangle.
    \]
 Therefore
 \begin{align*}
 \|V_h - \widehat V_h\|^2
& =  N^{-1} \sum_{i, \ip}^I \sum_{j}^{p(i)} \sum_{\jp}^{p(\ip)}  \langle \Delta_{i,\ip,h} , v_{j,i}\otimes v_{j^\prime,i^\prime}
- \hat v_{j,i}\otimes \hat v_{j^\prime,i^\prime} \rangle^2 \\
& \leq N^{-1} \max \|\Delta_{i,\ip,h}\|^2  \sum_{i, \ip}^I \sum_{j}^{p(i)} \sum_{\jp}^{p(\ip)} \| v_{j,i}\otimes v_{j^\prime,i^\prime}
- \hat v_{j,i}\otimes \hat v_{j^\prime,i^\prime}\|^2 \\
& \leq N^{-1} \max \|\Delta_{i,\ip,h}\|^2  \sum_{i, \ip}^I \sum_{j}^{p(i)} \sum_{\jp}^{p(\ip)} ( \alpha_{i,j} \|C_{ii} - \widehat C_{ii}\| +
\alpha_{\ip,\jp} \|C_{\ip \ip} - \widehat C_{\ip \ip}\| )|^2 \\
& \leq N^{-1} \max \|\Delta_{i,\ip,h}\|^2 \max \|C_{ii} - \widehat C_{ii}\|^2 \Gamma_{N}\\
& = N^{-2} I^3 \Gamma_{N} O_P(1),
\end{align*}
where the last equality follows from Lemma \ref{l:props}.  By Parceval's inequality
\[
\| V_h \|^2 =
N^{-1} \sum_{i, \ip}^I \sum_{j}^{p(i)} \sum_{\jp}^{p(\ip)}  \langle \Delta_{i,\ip,h} , v_{j,i}\otimes v_{j^\prime,i^\prime} \rangle^2
\leq N^{-1} \sum_{i, \ip} \|\Delta_{i,\ip,h}\|^2
 = N^{-1} O_P(I^2),
\]
and the same holds for $\| \widehat V_h\|^2$.
Combining everything, the original difference is of the order
\[
N p_N^{-1} \gamma_{p_N}^{-2} N^{-1} I^{3/2} \Gamma_N^{1/2}N^{-1/2} I
O_P(1) = N^{-1/2} p_N^{-1} I^{5/2} \gamma_{p_N}^{-2}
 \Gamma_N^{1/2}O_P(1),
\]
which is $o_P(1)$ by Assumption \ref{a:pn2}.
\end{proof}

\begin{lemma}  \label{l:4}
 If Assumptions \ref{a:1} and \ref{a:pn2} hold,
then under $H_0$,
\[
 \frac{ N \sum_{h=1}^H
[\widehat V_h^\top (\bC_{0,N} \otimes \bC_{0,N})^{-1} \widehat V_h -
\widehat V_h^\top (\widehat \bC_{0,N} \otimes \widehat \bC_{0,N})^{-1}
\widehat V_h]}{p_N \sqrt{2H}} =o_P(1).
\]
\end{lemma}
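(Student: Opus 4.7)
The plan is to express the difference of quadratic forms through the resolvent identity
\[
(\bC_{0,N} \otimes \bC_{0,N})^{-1} - (\widehat\bC_{0,N} \otimes \widehat\bC_{0,N})^{-1}
= (\widehat\bC_{0,N} \otimes \widehat\bC_{0,N})^{-1}
\bigl[(\widehat\bC_{0,N} \otimes \widehat\bC_{0,N}) - (\bC_{0,N} \otimes \bC_{0,N})\bigr]
(\bC_{0,N} \otimes \bC_{0,N})^{-1},
\]
and then control each factor by its operator norm, combining with $\|\widehat V_h\|$ via Cauchy--Schwarz in exactly the fashion of Lemma~\ref{l:3}.

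The three crucial norm bounds are as follows. First, the Kronecker telescoping identity
\[
\widehat\bC_{0,N} \otimes \widehat\bC_{0,N} - \bC_{0,N} \otimes \bC_{0,N}
= (\widehat\bC_{0,N} - \bC_{0,N}) \otimes \widehat\bC_{0,N}
+ \bC_{0,N} \otimes (\widehat\bC_{0,N} - \bC_{0,N})
\]
together with the multiplicativity $\|A\otimes B\|_{\mathrm{op}} = \|A\|_{\mathrm{op}}\|B\|_{\mathrm{op}}$ reduces everything to bounding $\|\widehat\bC_{0,N} - \bC_{0,N}\|$ and showing $\|\widehat\bC_{0,N}\|_{\mathrm{op}}, \|\bC_{0,N}\|_{\mathrm{op}} = O_P(1)$ (or at worst $O_P(\gamma_1)$, which Lemma~\ref{l:bound} bounds by $IM^{1/2}$). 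Second, the inverses contribute $\|(\bC_{0,N}\otimes\bC_{0,N})^{-1}\|_{\mathrm{op}} = \gamma_{p_N}^{-2}$, and, on the event that $\|\widehat\bC_{0,N} - \bC_{0,N}\| \le \gamma_{p_N}/2$, a Neumann-type argument yields $\|(\widehat\bC_{0,N}\otimes\widehat\bC_{0,N})^{-1}\|_{\mathrm{op}} \le 4\gamma_{p_N}^{-2}$; by Assumption~\ref{a:pn2} this event has probability tending to $1$. Third, $\|\widehat\bC_{0,N}-\bC_{0,N}\|$ is bounded (using its Frobenius norm) by splitting through $\tilde\bC_{0,N} = N^{-1}\sum_n \bX_n\bX_n^\top$: the piece $\tilde\bC_{0,N}-\bC_{0,N}$ is $O_P(p_N N^{-1/2})$ by the usual second moment bound, while $\widehat\bC_{0,N}-\tilde\bC_{0,N}$ is absorbed by the same $\hat v_{i,j}\to v_{i,j}$ expansion already used in Lemma~\ref{l:3}, with the $\alpha_{i,j}^{-1}$ gaps aggregating into $\Gamma_N^{1/2}$ and $\|C_{ii}-\widehat C_{ii}\| = O_P(N^{-1/2})$ via Lemma~\ref{l:props}.

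Combining these ingredients with $\|\widehat V_h\|^2 = O_P(I^2/N)$ from Lemma~\ref{l:3}, the $h$-th summand in the numerator is of order
\[
N \cdot \|\widehat V_h\|^2 \cdot \gamma_{p_N}^{-4} \cdot \|\widehat\bC_{0,N} - \bC_{0,N}\|_{\mathrm{op}} \cdot O_P(1)
= I^2 \gamma_{p_N}^{-4} \bigl[p_N N^{-1/2} + N^{-1/2} I^{3/2}\Gamma_N^{1/2}\bigr] O_P(1),
\]
and after dividing by $p_N\sqrt{2H}$ each piece is dominated by the quantity
$N^{-1/2} p_N^{-1} \gamma_{p_N}^{-3} I^3 \Gamma_N^{1/2}$
appearing in Assumption~\ref{a:pn2} (using Lemma~\ref{l:bound} to absorb constants and to trade $p_N$ for $\gamma_{p_N}^{-1}I$ as needed). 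Summing over $h=1,\dots,H$ with $H$ fixed gives $o_P(1)$.

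The main obstacle is bookkeeping the simultaneous contributions of the two sources of error in $\widehat\bC_{0,N}-\bC_{0,N}$ (sample fluctuation of the covariance of the scores plus the perturbation induced by $\hat v_{i,j}\ne v_{i,j}$) and making sure every factor of $I$, $\gamma_{p_N}$ and $\Gamma_N^{1/2}$ lands inside the single exponent of Assumption~\ref{a:pn2}; a less careful bound easily loses a power of $\gamma_{p_N}^{-1}$ or $I$. Invertibility of $\widehat\bC_{0,N}$, needed to make sense of the displayed difference, is handled by the high-probability Neumann bound, with the generalized-inverse version in Section~\ref{ss:fsi} only being relevant for implementation, not for the asymptotic comparison.
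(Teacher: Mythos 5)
Your skeleton (Cauchy--Schwarz against $\|\widehat V_h\|^2$, a perturbation bound for the inverse Kronecker product, and the Lemma~\ref{l:props}-type rates for $\|\widehat\bC_{0,N}-\bC_{0,N}\|$, with the Weyl/Neumann handling of $\hat\gamma_{p_N}$) is the right one, and your high-probability control of $\|(\widehat\bC_{0,N}\otimes\widehat\bC_{0,N})^{-1}\|$ is sound. The gap is in the order in which you telescope. By writing the difference as $(\widehat\bC\otimes\widehat\bC)^{-1}\bigl[\widehat\bC\otimes\widehat\bC-\bC\otimes\bC\bigr](\bC\otimes\bC)^{-1}$ and then expanding the middle factor, you pay $\gamma_{p_N}^{-2}\cdot\gamma_{p_N}^{-2}\cdot\bigl(\|\widehat\bC\|_{\mathrm{op}}+\|\bC\|_{\mathrm{op}}\bigr)\|\widehat\bC-\bC\|$, i.e.\ an overall factor of order $\gamma_{p_N}^{-4}\gamma_1$. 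The $O_P(1)$ you attach to the operator norms is not justified: $\gamma_1$ is only bounded by $IM^{1/2}$ (and is of constant order even for $I$ fixed), so relative to the paper's bound you have lost a factor $\gamma_1/\gamma_{p_N}$, which diverges. Assumption~\ref{a:pn2} is calibrated exactly to the exponent $\gamma_{p_N}^{-3}$: the combination $N\cdot\|\widehat V_h\|^2\cdot\gamma_{p_N}^{-3}\cdot\|\widehat\bC-\bC\|/p_N$ equals precisely $N^{-1/2}p_N^{-1}\gamma_{p_N}^{-3}I^3\Gamma_N^{1/2}\,O_P(1)$, with no slack left over. Your proposed rescue --- ``trade $p_N$ for $\gamma_{p_N}^{-1}I$ via Lemma~\ref{l:bound}'' --- goes the wrong way: that lemma gives $p_N\le\gamma_{p_N}^{-1}IM^{1/2}$, an upper bound on $p_N$, which cannot cancel an extra $\gamma_{p_N}^{-1}$ (or your extra $I^{1/2}$ in the $N^{-1/2}I^{3/2}\Gamma_N^{1/2}$ term) sitting in the numerator; nor is there any general inequality of the form $\Gamma_N^{1/2}\gtrsim\gamma_{p_N}^{-1}$, since $\Gamma_N$ is built from the within-series eigengaps $\alpha_{i,j}$ while $\gamma_{p_N}$ can be made arbitrarily small by cross-sectional correlation without affecting those gaps. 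So, as written, your final domination claim fails under Assumption~\ref{a:pn2}.

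The fix is the paper's ordering: telescope the \emph{inverted} Kronecker product first,
\begin{equation*}
\bC_{0,N}^{-1}\otimes\bC_{0,N}^{-1}-\widehat\bC_{0,N}^{-1}\otimes\widehat\bC_{0,N}^{-1}
=\bigl(\bC_{0,N}^{-1}-\widehat\bC_{0,N}^{-1}\bigr)\otimes\bC_{0,N}^{-1}
+\widehat\bC_{0,N}^{-1}\otimes\bigl(\bC_{0,N}^{-1}-\widehat\bC_{0,N}^{-1}\bigr),
\end{equation*}
which costs only $\gamma_{p_N}^{-1}+\hat\gamma_{p_N}^{-1}$, and then apply the scalar resolvent identity
$\|\bC_{0,N}^{-1}-\widehat\bC_{0,N}^{-1}\|\le\gamma_{p_N}^{-1}\hat\gamma_{p_N}^{-1}\|\bC_{0,N}-\widehat\bC_{0,N}\|$,
so that only inverse norms appear and the total factor is $\gamma_{p_N}^{-3}(1+o_P(1))$ after showing $\hat\gamma_{p_N}/\gamma_{p_N}\overset{P}{\to}1$ (your Weyl argument already does this). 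Combined with $\|\widehat V_h\|^2=O_P(I^2/N)$ and $\|\widehat\bC_{0,N}-\bC_{0,N}\|=O_P(IN^{-1/2}\Gamma_N^{1/2})$, this yields exactly the quantity in Assumption~\ref{a:pn2}, which is what the lemma needs.
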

\begin{proof}
Using the Cauchy--Schwarz inequality and
operator norm inequality we have that
\begin{align*}
& |\widehat V_h^\top (\bC_{0,N} \otimes \bC_{0,N})^{-1} \widehat V_h -
\widehat V_h^\top (\widehat \bC_{0,N} \otimes \widehat \bC_{0,N})^{-1} \widehat V_h |\\
\leq & \| \widehat V_h^\top\|^2 \|(\bC_{0,N} \otimes \bC_{0,N})^{-1} -  (\widehat \bC_{0,N} \otimes \widehat \bC_{0,N})^{-1} \|_{\cS} \\
\leq &  \| \widehat V_h\|^2 [\gamma_{p_N}^{-1} +  \widehat \gamma_{p_N}^{-1} ] \| \bC_{0,N}^{-1} -  \widehat \bC_{0,N}^{-1} \|_{\cS} \\
\leq &  \| \widehat V_h\|^2 [\gamma_{p_N}^{-1} +  \widehat \gamma_{p_N}^{-1} ] [ \gamma_{p_N}^{-1}  \widehat \gamma_{p_N}^{-1} ]\| \bC_{0,N} -  \widehat \bC_{0,N} \|_{\cS} \\
\leq &  \| \widehat V_h\|^2 [\gamma_{p_N}^{-1} +  \widehat \gamma_{p_N}^{-1} ] [ \gamma_{p_N}^{-1}  \hat \gamma_{p_N}^{-1} ]\| \bC_{0,N} -  \hat \bC_{0,N} \|.
\end{align*}
As before, we can apply Lemma \ref{l:props}.1 to obtain
\[
\| \widehat V_h\|^2 = \widehat V_h^\top \widehat V_h
= N^{-1} \sum_{i, \ip}^I \sum_{j}^{p(i)} \sum_{\jp}^{p(\ip)}  \langle \Delta_{i,\ip,h} , \hat v_{j,i}\otimes \hat v_{j^\prime,i^\prime} \rangle^2
\leq N^{-1} \sum_{i, \ip}^I \|  \Delta_{i,\ip,h}\|^2
 = O_P(N^{-1} I^2).
\]
Turning to the eigenvalues we have that
\[
[\gamma_{p_N}^{-1} +  \hat \gamma_{p_N}^{-1} ] [ \gamma_{p_N}^{-1}  \hat \gamma_{p_N}^{-1} ]
 = \gamma_{p_N}^{-3} \left[ \frac{\gamma_{p_N} }{\hat \gamma_{p_N}} +  \frac{\gamma_{p_N}^2 }{\hat \gamma_{p_N}^2}\right].
\]
Applying Lemma \ref{l:props}.4 to bound the difference
\[
\left|\frac{\hat \gamma_{p_N} }{ \gamma_{p_N}} - 1\right|
= \frac{|\hat \gamma_{p_N} - \gamma_{p_N}| }{ \gamma_{p_N}}
\leq \frac{ \| \widehat \bC_{0,N} - \bC_{0,N}\| }{\gamma_{p_N}}
= O_P(\gamma_{p_N}^{-1} I N^{-1/2} \Gamma_N^{1/2}) =o_P(1).
\]
Putting everything together, the original difference is
\[
O_P(\gamma_{p_N}^{-3}  I^3 N^{-1/2} p_N^{-1} \Gamma_N^{1/2} ) = o_P(1),
\]
by Assumption \ref{a:pn2}.
\end{proof}

Our last lemma contains several properties which were
used in the arguments developed above.

\begin{lemma} \label{l:props}   If Assumptions \ref{a:1} and \ref{a:pn2}
 hold, then we have the following properties:
\begin{enumerate}
\item $\max \| \Delta_{i,\ip,h}\|^2 = O_P(I^2)$ under $H_0$.
\item $\max \| C_{i \ip}\| = O(1)$.
\item $\max \| \widehat C_{ii} - C_{ii} \| = O_P(N^{-1/2} I )$ and $\max \| \widehat C_{ii} - C_{ii} \|^2 = O_P(N^{-1} I )$

\item $\|  \widehat \bC_{0,N} - \bC_{0,N} \|
= O_P(I N^{-1/2} \Gamma_N^{1/2})$ \\
\end{enumerate}

\begin{proof}
\begin{enumerate}
\item  For each fixed $i$ and $\ip$, we have that
\begin{align*}
 \E\| \Delta_{i,\ip,h}\|^2 & = N^{-1} \sum_{n=1}^{N-h} \sum_{\np=1}^{N-h} \langle X_{i,n} \otimes X_{i,n+h}, X_{\ip,\np} \otimes X_{\ip,\np+h} \rangle \\
& = N^{-1} \sum_{n=1}^{N-h} \E \langle X_{i,n} \otimes X_{i,n+h}, X_{\ip,n} \otimes X_{\ip,n+h} \rangle \\
& \leq \E\|X_{i,1}\|^2 \E\|X_{\ip,1}\|^2 \leq M.
\end{align*}
Therefore we have that
\[
\E \left[\max \| \Delta_{i,\ip,h}\|^2\right] \leq I^2 M,
\]
and the result follows from Markov's inequality.

\item By Jensen's inequality we have that
\begin{align*}
 \|C_{i \ip}\| &\leq  \E\|X_{i,n}\otimes X_{\ip,n}\| = \E[\|X_{i,n}\| \|X_{\ip,n}\|] \leq \sqrt{\E\|X_{i,n}\|^2 \E\|X_{\ip,n}\|^2} \\
& \leq (\E\|X_{i,n}\|^4 \E\|X_{\ip,n}\|^4)^{1/4} \leq M^{1/2},
\end{align*}
which proves the claim.

\item  The argument is the same as in 1.

\item By the triangle inequality we have
\[
\|  \widehat \bC_{0,N} - \bC_{0,N} \|
\leq \|  \widehat \bC_{0,N} -  \tilde\bC_{0,N}  \|
+  \| \tilde  \bC_{0,N} -  \bC_{0,N}\|
\]
where $ \tilde\bC_{0,N}$ is formed by projecting the $C_{i\ip}$
onto the estimated PCs.  So the square of the first term is given by
\begin{align*}
\| \widehat \bC_{0,N} -  \tilde \bC_{0,N} \|^2
& = \sum_{i, \ip}^I \sum_{j=1}^{p(i)} \sum_{j=1}^{p(\ip)}
\left( N^{-1} \sum_{n=1}^N \langle X_{i,n} \otimes X_{\ip,n}, \hat v_{j,i} \otimes \hat v_{\jp,\ip} \rangle
-\langle C_{i,\ip},  \hat v_{j,i} \otimes  \hat v_{\jp,\ip} \rangle \right)^2 \\
& \leq \sum_{i, \ip}^I \left\| N^{-1} \sum_{n=1}^N  X_{i,n} \otimes X_{\ip,n}  - C_{i, \ip}\right\|^2
= O_P(N^{-1} I^2).
\end{align*}
The square of the second term is given by
\begin{align*}
 \| \tilde  \bC_{0,N} -  \bC_{0,N}\|^2
 & =  \sum_{i, \ip}^I \sum_{j=1}^{p(i)} \sum_{j=1}^{p(\ip)} \langle C_{i, \ip}, \hat v_{j,i} \otimes \hat v_{\jp,\ip} -  v_{j,i} \otimes v_{\jp,\ip} \rangle^2 \\
& \leq  \sum_{i, \ip}^I \sum_{j=1}^{p(i)} \sum_{j=1}^{p(\ip)} \|C_{i,\ip}\|^2 (\|v_{j,i} - \hat v_{j,i}\| + \|v_{\jp,\ip} - \hat v_{\jp,\ip}\| )^2 \\
& \leq  \sum_{i, \ip}^I \sum_{j=1}^{p(i)} \sum_{j=1}^{p(\ip)} \|C_{i,\ip}\|^2 (\sqrt{2} \alpha_{j,i}^{-1} \| C_{ii} - \widehat C_{i i}\| + \sqrt{2} \alpha_{\jp,\ip}^{-1} \| C_{\ip \ip} - \widehat C_{\ip \ip}\| )^2  \\
& \leq 2 \max \|C_{i,\ip}\|^2 \max \| C_{ii} - \widehat C_{ii} \|^2
\sum_{i, \ip}^I \sum_{j=1}^{p(i)} \sum_{j=1}^{p(\ip)} (\alpha_{j,i}^{-1} + \alpha_{\jp,\ip}^{-1})^2 \\
& = I N^{-1}\Gamma_N O_P(1).
\end{align*}
Therefore both terms are asymptotically bounded by
$I^2 N^{-1} \Gamma_N O_P(1)$, which proves the claim.

\end{enumerate}
\end{proof}
\end{lemma}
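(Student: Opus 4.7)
The plan is to handle the four claims in order, since each one uses a similar template: bound the second moment of the relevant norm, sum the bound over the index set producing the max, and then invoke Markov's inequality to obtain the stated $O_P$ rate. The properties of the iid/zero-mean structure under $H_0$ and the fourth-moment bound $M$ from Assumption~\ref{a:1} do all of the heavy lifting.

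For claim 1, I would expand $\E\|\Delta_{i,\ip,h}\|^2$ as the double sum $N^{-1}\sum_{n,\np}\E\langle X_{i,n}\otimes X_{i,n+h},X_{\ip,\np}\otimes X_{\ip,\np+h}\rangle$. Under $H_0$ and the zero-mean assumption, every cross term with $n\neq \np$ factors into a product containing a mean-zero factor and vanishes (here $h\ge 1$ prevents collapse of time indices). The diagonal $n=\np$ terms equal $\E[\langle X_{i,n},X_{\ip,n}\rangle\langle X_{i,n+h},X_{\ip,n+h}\rangle]$, and by independence across $n$ (since $h\ge 1$) this factors; by Cauchy--Schwarz and the uniform fourth moment it is bounded by $M$. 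Summing gives $\E\|\Delta_{i,\ip,h}\|^2\le M$, hence $\E\!\left[\max_{i,\ip}\|\Delta_{i,\ip,h}\|^2\right]\le I^2 M$ and Markov finishes it.

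Claim 2 is immediate from Jensen applied to the operator norm, $\|C_{i\ip}\|\le \E\|X_{i,n}\otimes X_{\ip,n}\|=\E[\|X_{i,n}\|\|X_{\ip,n}\|]$, followed by Cauchy--Schwarz and the fourth-moment bound to get the uniform constant $M^{1/2}$. Claim 3 is the standard rate for a sample covariance operator: a direct variance computation gives $\E\|\widehat C_{ii}-C_{ii}\|^2 = O(N^{-1})$ (using iid and the fourth-moment bound exactly as in claim 1), then summing over $i=1,\dots,I$ and applying Markov yields the stated max rates.

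The main obstacle is claim 4. Here I would split $\widehat\bC_{0,N}-\bC_{0,N}=(\widehat\bC_{0,N}-\tilde\bC_{0,N})+(\tilde\bC_{0,N}-\bC_{0,N})$, where $\tilde\bC_{0,N}$ has entries $\langle C_{i\ip},\hat v_{j,i}\otimes\hat v_{\jp,\ip}\rangle$. The first piece is controlled entrywise by Parseval (since the $\hat v_{j,i}\otimes\hat v_{\jp,\ip}$ are orthonormal in $j,\jp$ for fixed $i,\ip$) and reduces to $\sum_{i,\ip}\|N^{-1}\sum_n X_{i,n}\otimes X_{\ip,n}-C_{i\ip}\|^2=O_P(N^{-1}I^2)$ by the same variance computation used in claim 3. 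The second piece is the perturbation part: expand $\hat v_{j,i}\otimes\hat v_{\jp,\ip}-v_{j,i}\otimes v_{\jp,\ip}$, use the bound $\|v_{j,i}-\hat v_{j,i}\|\le \sqrt{2}\,\alpha_{i,j}^{-1}\|\widehat C_{ii}-C_{ii}\|$ (standard eigenfunction perturbation, e.g.\ \citet{HKbook}), factor $\|C_{i\ip}\|^2\le M$ from claim 2, and the plug in the max rate from claim 3 to pull out $\max_i\|\widehat C_{ii}-C_{ii}\|^2=O_P(N^{-1}I)$. The remaining sum becomes $\sum_{i,\ip}\sum_{j,\jp}(\alpha_{i,j}^{-1}+\alpha_{\ip,\jp}^{-1})^2=\Gamma_N$, giving $\|\tilde\bC_{0,N}-\bC_{0,N}\|^2=O_P(I N^{-1}\Gamma_N)$. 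The first piece is of smaller order (no $\Gamma_N$), so combining yields the claim. The only delicate point is making sure the perturbation bound and the max over stations combine cleanly into the product $I\cdot N^{-1/2}\cdot\Gamma_N^{1/2}$ rather than some mismatched exponents; a careful bookkeeping of the factors of $I$ across the two summands is the part most likely to trip one up.
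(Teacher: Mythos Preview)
Your proposal is correct and follows essentially the same argument as the paper: the same moment-plus-Markov strategy for claims 1--3, and the identical two-term decomposition $\widehat\bC_{0,N}-\bC_{0,N}=(\widehat\bC_{0,N}-\tilde\bC_{0,N})+(\tilde\bC_{0,N}-\bC_{0,N})$ for claim 4, with Parseval controlling the first piece and the eigenfunction perturbation bound $\|v_{j,i}-\hat v_{j,i}\|\le\sqrt{2}\,\alpha_{i,j}^{-1}\|\widehat C_{ii}-C_{ii}\|$ together with claims 2--3 producing the $\Gamma_N$ factor in the second. The only cosmetic difference is that the paper crudely absorbs both pieces into the common upper bound $I^2N^{-1}\Gamma_N$, whereas you note the first piece is of smaller order; either way the stated rate follows.
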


{
\begin{proof}[Proof of Theorem \ref{t:multi:HA}]
Analogous results to Lemmas \ref{l:3} and \ref{l:4} are obtained in the same way and are thus omitted for brevity.  We mention that the key difference is that the $\|\Delta_{i,\ip,h}\|$ is no longer of order $O_P(1)$, but $O_P(N^{1/2})$.  The $\widehat C_{ii}$ are still root-$N$ consistent since the series is assumed to be $L^4$-m approximable.  We therefore only show that
\[
\frac{Q_N - p_N^2 H}{p_N \sqrt{2H}} \overset{P}{\to} \infty.
\]
We assume that those lag terms which exhibit correlation are contained in the set $\cH^\star$ we therefore begin with the lower bound
\[
Q_N  \geq  N \sum_{h \in \cH^\star} \bV_{h^\star}^\top ( \bC_0 \otimes \bC_0 )^{-1}\bV_{h^\star}.
\]
Since the smallest eigenvalue of $( \bC_0 \otimes \bC_0 )^{-1}$ is $\gamma_1^{-2}$ we can bound $Q_N$ below using 
\[
 Q_N \geq N \gamma_1^{-2} \sum_{h \in \cH^\star} \| \bV_{h}\|^2.
\]
Isolating the pairs $\cI^\star_h$ which are assumed to be correlated (at a lag of $h$) we can further bound below as
\[
Q_N \geq  \gamma_1^{-2}N \sum_{h \in \cH^\star} \sum_{(i,j) \in \cI_h^\star} \left( N^{-1}\sum_{n=1}^N ( \langle X_{n}, v_{i} \rangle \langle X_{n+h}, v_{j} \rangle \right) ^2
=\gamma_1^{-2}  N R (1+o_P(1))  \sum_{h \in \cH^\star} | \cI_h| 
\]
where the last equality holds since, by Assumption \ref{a:HA}, the summands form a stationary and ergodic sequence.  Combining Lemma \ref{l:bound} with Assumption \ref{a:pn2}, $N \gamma_1^{-2}$ tends to infinity faster than $p_N^2$ and the claim holds.  We also see that the effect of having more indices which exhibit correlation is additive.

\end{proof}
}

\bibliographystyle{apalike}
\renewcommand{\baselinestretch}{0.9}
\small
\bibliography{panel}

\end{document}